\newtheorem{theorem}{Theorem}
\newtheorem{corollary}{Corollary}
\newtheorem{proposition}{Proposition}
\newenvironment{proof}[1][Proof]{\textbf{#1.} }{\ \rule{0.5em}{0.5em}}
\def\@removefromreset#1#2{\let\@tempb\@elt
     \def\@tempa#1{@&#1}\expandafter\let\csname @*#1*\endcsname\@tempa
     \def\@elt##1{\expandafter\ifx\csname @*##1*\endcsname\@tempa\else
    \noexpand\@elt{##1}\fi}     \expandafter\edef\csname cl@#2\endcsname{\csname cl@#2\endcsname}     \let\@elt\@tempb
     \expandafter\let\csname @*#1*\endcsname\@undefined}
\begin{document}

\title{Local hidden variable modelling, classicality, quantum separability,
and the original Bell inequality}
\author{Elena R. Loubenets\bigskip \\
Applied Mathematics Department, Moscow State Institute \\
of Electronics and Mathematics, Moscow 109028, Russia}
\date{}
\maketitle

\begin{abstract}
We introduce a general condition sufficient\emph{\ }for the validity of the
original Bell inequality (1964) in a local hidden variable (LHV) frame. This
condition can be checked experimentally and incorporates only as a
particular case the assumption on perfect correlations or anticorrelations
usually argued for this inequality in the literature. Specifying this
general condition for a quantum bipartite case, we introduce the whole class
of bipartite quantum states, separable and nonseparable, that (i) admit an
LHV description under any bipartite measurements with two settings per site;
(ii) do not necessarily exhibit perfect correlations and may even have a
negative correlation function if the same quantum observable is measured at
both sites, but (iii) satisfy the "perfect correlation" version of the
original Bell inequality for any three bounded quantum observables $A_{1},$ $%
A_{2}=B_{1},$ $B_{2}$ at sites "A" and "B", respectively. Analysing the
validity of this general LHV condition under classical and quantum
correlation scenarios with the same physical context, we stress that, unlike
the Clauser-Horne-Shimony-Holt (CHSH) inequality, the original Bell
inequality distinguishes between classicality and quantum separability.
\end{abstract}

\tableofcontents

\section{Introduction}

Analysing in 1964 a possibility of a local hidden variable (LHV) description
of bipartite\footnote{%
In quantum information, two parties (observers) are usually named as Alice
and Bob.} quantum measurements on two-qubits, J. Bell introduced \cite{1}
the LHV\ constraint on correlations, usually now referred to as \emph{the
original Bell inequality.} Both Bell's proofs \cite{1, 2} of this LHV
inequality are essentially built up on two additional assumptions - a
dichotomic character of Alice's and Bob's measurements plus the perfect
correlation or anticorrelation of Alice's and Bob's outcomes for a definite
pair of their local settings. Specifically, the latter assumption is usually
abbreviated as the assumption on perfect correlations or anticorrelations.

Bell's proofs are now reproduced in any textbook on quantum information and
there exists the widespread misconception\footnote{%
For this misleading opinion, see, for example, the corresponding paragraph
of the Wikipedia article on Bell's theorem.} that, in any LHV case, the
original Bell inequality cannot hold without the additional assumptions used
by Bell \cite{1, 2}, specifically, without the assumption on perfect
correlations or anticorrelations - as it has been, for example, claimed by
Simon \cite{3} and Zukowski \cite{4}.

However, Bell's additional assumptions are \emph{only} \emph{sufficient }but
not necessary for the validity of the original Bell inequality in an LHV
frame. Based on operator methods, we, for example, proved in [5 - 8] that,
in either of bipartite cases, classical\footnote{%
Throughout the present paper, the term "classical" is meant in its physical
sense.} or quantum, the original Bell inequality holds for Alice's and Bob's
real-valued outcomes in $[-1,1]$ of any spectral type, continuous or
discrete, not necessarily dichotomic, and that there exist bipartite quantum
states $\rho $ on a Hilbert space $\mathcal{H}\otimes \mathcal{H}$ that do
not necessarily exhibit perfect correlations and may even have [8] a
negative correlation if the same quantum observable is measured at both
sites but satisfy the "perfect correlation" (minus sign) version of the
original Bell inequality\footnote{%
For correlations of an arbitrary type, not necessarily quantum, the original
Bell inequality, in its "perfect correlation/anticorrelation" version has
the form (\ref{9}).}:%
\begin{equation}
\left\vert \text{ }\mathrm{tr}[\rho \{A_{1}\otimes B_{1}\}]-\mathrm{tr}[\rho
\{A_{1}\otimes B_{2}\}]\right\vert \leq 1-\mathrm{tr}[\rho \{B_{1}\otimes
B_{2}\}],  \label{xx}
\end{equation}%
for \emph{any} three bounded quantum observables $A_{1},$ $A_{2}=B_{1},$ $%
B_{2}$ on $\mathcal{H}$, measured at sites of Alice and Bob, respectively,
and with eigenvalues in $[-1,1]$.

Note that if bipartite measurements, with two settings per site, are
performed on any of bipartite quantum states considered in [5 - 8], then
these bipartite quantum measurements \emph{admit} \emph{an LHV description}.
For a nonseparable quantum state, this fact follows from theorem 1 in \cite%
{9} and the existence (see Eq. (A5) in \cite{8}) for these bipartite
measurements of a joint probability measure returning all observed joint
distributions as marginals.

Our results in [5 - 8] clearly indicate that, in an LHV frame, the original
Bell inequality holds for outcomes of any spectral type, not necessarily
dichotomic, and under an additional assumption which is more general than
the assumption on perfect correlations or anticorrelations and ensures the
existence, in a quantum LHV case, of bipartite quantum states never
violating the original Bell inequality (\ref{xx}).

The aim of the present paper is to find such a general LHV condition and to
specify the validity of this general condition under bipartite correlation
scenarios on classical and quantum particles. The paper is organized as
follows.

In section 2, for an arbitrary bipartite correlation scenario, with two
settings per site and real-valued outcomes of any spectral type, discrete or
continuous, we introduce a new condition sufficient for the validity of the
original Bell inequality in an LHV frame. We prove that, in an LHV\ model of
any type, this new LHV condition is more general than the assumption on
perfect correlations or anticorrelations and incorporates the latter only as
a particular case. We show that, in a dichotomic case, the new general LHV\
condition can be tested experimentally.

In section 3,\ we stress that, under a bipartite correlation experiment,
performed on classical particles and described by bounded classical
observables $A_{1},$ $A_{2}=B_{1},$ $B_{2}$ at sites of Alice and Bob,
respectively, the minus sign version of the new LHV condition and,
therefore, the "perfect correlation" version of the original Bell inequality
are fulfilled for any initial state of classical particles and any type of
Alice's and Bob's classical measurements - ideal (necessarily exhibiting
perfect correlations if the same classical observable is measured at both
sites) or non-ideal (not necessarily exhibiting perfect correlations).

In section 4, for a quantum correlation scenario, we specify the new general
LHV condition in quantum terms. This allows us to introduce a new class of
bipartite quantum states that admit an LHV description under any bipartite
quantum measurements with two settings per site and satisfy the "perfect
correlation" version of the original Bell inequality for any bounded quantum
observables $A_{1},$ $A_{2}=B_{1},$ $B_{2}$ at sites of Alice and Bob,
respectively. These quantum states do not necessarily exhibit perfect
correlations and may even have a negative correlation function whenever the
same quantum observable is measured at both sites. We stress that an
arbitrary separable quantum state does not need to belong to this new class
of bipartite quantum states and that all bipartite quantum states specified
by us earlier in [5 - 8] are included into this new state class only as a
particular subclass.

In section 5, we summarize the main results of the present paper and discuss
their significance for the statistical analysis of correlation experiments
on classical and quantum particles. We stress that these results rigorously
disprove the widespread misconceptions existing in the literature ever since
1964, in particular, the recent claims of Simon \cite{3} and Zukowski \cite%
{4} on the relation between the original Bell inequality, perfect
correlations, classicality and quantum separability.

\section{General bipartite case}

Consider the probabilistic description\footnote{%
For the general framework on the probabilistic description of a multipartite
correlation scenario, see \cite{9}.} of a bipartite correlation scenario
("gedanken" experiment), specified at Alice's and Bob's sites by measurement
settings $a_{i},$ $b_{k},$ $i,k=1,2,$ and real-valued outcomes $\lambda
_{1},\lambda _{2}\in \lbrack -1,1]$, respectively. This correlation scenario
is described by four bipartite joint measurements $(a_{i},b_{k}),$ $i,k=1,2,$
with joint probability distributions $P^{(a_{i},b_{k})}$.

For a joint measurement $(a_{i},b_{k}),$ denote by 
\begin{equation}
\langle \lambda _{n}\rangle ^{(a_{i},b_{k})}:=\dint \lambda _{n}\text{ }%
P^{(a_{i},b_{k})}(\mathrm{d}\lambda _{1}\times \mathrm{d}\lambda _{2}),\text{
\ \ }n=1,2,  \label{3}
\end{equation}%
the mean values of outcomes at Alice's ($n=1$) and Bob's ($n=2$) sites and
by 
\begin{equation}
\langle \lambda _{1}\lambda _{2}\rangle ^{(a_{i},b_{k})}:=\dint \lambda
_{1}\lambda _{2}P^{(a_{i},b_{k})}(\mathrm{d}\lambda _{1}\times \mathrm{d}%
\lambda _{2})  \label{4}
\end{equation}%
the expected value of the product of their outcomes. In quantum information,
this product expectation is usually referred to as a \emph{correlation
function }or \emph{correlation}, for short.

If, under a joint measurement $(a_{i},b_{k}),$ Alice's and Bob's outcomes
are perfectly correlated or anticorrelated in the sense that either event $%
\{\lambda _{1}=\lambda _{2}\}$ or event $\{\lambda _{1}=-\lambda _{2}\}$ are
observed with certainty, then%
\begin{eqnarray}
P^{(a_{i},b_{k})}(\{\lambda _{1} &=&\lambda _{2}\})=\dint\limits_{\lambda
_{1}=\lambda _{2}}P^{(a_{i},b_{k})}(\mathrm{d}\lambda _{1}\times \mathrm{d}%
\lambda _{2})=1  \label{5} \\
&&\text{or}  \notag \\
P^{(a_{i},b_{k})}(\{\lambda _{1} &=&-\lambda _{2}\})=\dint\limits_{\lambda
_{1}=-\lambda _{2}}P^{(a_{i},b_{k})}(\mathrm{d}\lambda _{1}\times \mathrm{d}%
\lambda _{2})=1,  \notag
\end{eqnarray}%
respectively. For two-valued outcomes $\lambda _{1},\lambda _{2}=\pm 1,$
this perfect correlation/anticorrelation condition is equivalently
represented by the restriction on the correlation function: $\langle \lambda
_{1}\lambda _{2}\rangle ^{(a_{i},b_{k})}$ $=\pm 1,$ introduced originally by
Bell \cite{1, 2}.

The following theorem introduces a new condition sufficient for the validity
of the original Bell inequality in an arbitrary LHV model. As it is further
shown below, this new LHV\ condition is more general than the assumption on
perfect correlations or anticorrelations and incorporates the latter only as
a particular case.

\begin{theorem}
Let a $2\times 2$ - setting bipartite correlation scenario, with outcomes $%
\lambda _{1},\lambda _{2}\in \lbrack -1,1]$ of an arbitrary spectral type,
discrete or continuous, admit an LHV model for correlation functions, that
is, each $\langle \lambda _{1}\lambda _{2}\rangle ^{(a_{i},b_{k})},$ $%
i,k=1,2,$ admits representation%
\begin{equation}
\langle \lambda _{1}\lambda _{2}\rangle
^{(a_{i},b_{k})}=\dint\limits_{\Omega }f_{1}^{(a_{i})}(\omega
)f_{2}^{(b_{k})}(\omega )\text{ }\nu (\mathrm{d}\omega )  \label{7}
\end{equation}%
in terms of some variables $\omega \in \Omega ,$ a probability distribution $%
\nu $ of these variables and real-valued functions $f_{1}^{(a_{i})}(\omega
), $ $f_{2}^{(b_{k})}(\omega )\in $ $[-1,1].$ If, in this LHV model,
condition%
\begin{equation}
\dint\limits_{\Omega }f_{2}^{(b_{2})}(\omega )\text{ }\left(
f_{2}^{(b_{1})}(\omega )\mp f_{1}^{(a_{2})}(\omega )\right) \nu (\mathrm{d}%
\omega )\geq 0  \label{8}
\end{equation}%
is fulfilled in its minus sign or plus sign form, then the original Bell
inequality%
\begin{equation}
\left\vert \langle \lambda _{1}\lambda _{2}\rangle ^{(a_{1},b_{1})}-\langle
\lambda _{1}\lambda _{2}\rangle ^{(a_{1},b_{2})}\right\vert \leq 1\mp
\langle \lambda _{1}\lambda _{2}\rangle ^{(a_{2},b_{2})}  \label{9}
\end{equation}%
holds in its minus sign ("perfect correlation") or plus sign ("perfect
anticorrelation") version, respectively.
\end{theorem}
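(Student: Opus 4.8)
The plan is to argue entirely at the level of the LHV integral representation (\ref{7}) and to reduce (\ref{9}) to a single elementary pointwise estimate. Abbreviate $C_{ik}:=\langle\lambda_1\lambda_2\rangle^{(a_i,b_k)}=\int_{\Omega}f_1^{(a_i)}(\omega)f_2^{(b_k)}(\omega)\,\nu(\mathrm{d}\omega)$. Then the left-hand side of (\ref{9}) equals $|C_{11}-C_{12}|=\left|\int_{\Omega}f_1^{(a_1)}(\omega)\big(f_2^{(b_1)}(\omega)-f_2^{(b_2)}(\omega)\big)\,\nu(\mathrm{d}\omega)\right|$. First I would bring the modulus under the integral sign (triangle inequality for integrals) and use $|f_1^{(a_1)}(\omega)|\le 1$ to obtain $|C_{11}-C_{12}|\le\int_{\Omega}\big|f_2^{(b_1)}(\omega)-f_2^{(b_2)}(\omega)\big|\,\nu(\mathrm{d}\omega)$.

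The core of the argument is the pointwise inequality $|y-z|\le 1-yz$, valid for all $y,z\in[-1,1]$; it follows at once by squaring the two non-negative sides, since $(1-yz)^2-(y-z)^2=(1-y^2)(1-z^2)\ge 0$. Applying it with $y=f_2^{(b_1)}(\omega)$ and $z=f_2^{(b_2)}(\omega)$ (both in $[-1,1]$ by hypothesis) and integrating against the probability measure $\nu$ gives
\[
|C_{11}-C_{12}|\;\le\;\int_{\Omega}\big(1-f_2^{(b_1)}(\omega)f_2^{(b_2)}(\omega)\big)\,\nu(\mathrm{d}\omega)\;=\;1-\int_{\Omega}f_2^{(b_1)}(\omega)f_2^{(b_2)}(\omega)\,\nu(\mathrm{d}\omega).
\]

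It then remains only to dominate the right-hand side by $1\mp C_{22}$, and this is precisely what hypothesis (\ref{8}) supplies. In its minus-sign form, (\ref{8}) reads $\int_{\Omega}f_2^{(b_1)}f_2^{(b_2)}\,\nu\ge\int_{\Omega}f_1^{(a_2)}f_2^{(b_2)}\,\nu=C_{22}$, whence the displayed bound yields $|C_{11}-C_{12}|\le 1-C_{22}$, i.e. (\ref{9}) in its minus-sign version; in its plus-sign form it reads $\int_{\Omega}f_2^{(b_1)}f_2^{(b_2)}\,\nu\ge-C_{22}$, whence $|C_{11}-C_{12}|\le 1+C_{22}$, i.e. (\ref{9}) in its plus-sign version. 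Note that positivity of $1\mp C_{22}$ need not be checked separately, as it is automatic from the resulting bound on the non-negative quantity $|C_{11}-C_{12}|$.

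I do not expect a genuine obstacle in this argument; the one non-routine point is to identify the correct pointwise estimate. The "textbook" telescoping used by Bell — inserting a factor $1\mp f_1^{(a_2)}f_2^{(b_2)}$ into the integrand and splitting — is too wasteful at this level of generality and effectively forces one back to a true perfect-(anti)correlation assumption; the quadratic bound $|y-z|\le 1-yz$, combined with the precise shape of condition (\ref{8}), is exactly what makes the conclusion hold in general. Beyond that, one only needs to keep track of the standing assumptions that $\nu$ is a probability measure and that all representing functions take values in $[-1,1]$ ($\nu$-a.e. suffices), so that the pointwise inequality and the bound $|f_1^{(a_1)}|\le 1$ are legitimate under the integral.
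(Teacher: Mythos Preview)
Your proof is correct and follows essentially the same route as the paper: factor out $f_{1}^{(a_{1})}$ and bound it by $1$, apply the pointwise estimate $|y-z|\le 1-yz$ for $y,z\in[-1,1]$, then invoke condition (\ref{8}) to pass from $1-\int f_{2}^{(b_{1})}f_{2}^{(b_{2})}\,\nu$ to $1\mp C_{22}$. The only additions on your side are the brief justification of the pointwise inequality via $(1-yz)^{2}-(y-z)^{2}=(1-y^{2})(1-z^{2})\ge 0$ and the closing commentary, neither of which departs from the paper's argument.
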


\begin{proof}
In view of representation (\ref{7}),%
\begin{eqnarray}
&&\langle \lambda _{1}\lambda _{2}\rangle ^{(a_{1},b_{1})}-\langle \lambda
_{1}\lambda _{2}\rangle ^{(a_{1},b_{2})}  \label{10} \\
&=&\dint\limits_{\Omega }f_{1}^{(a_{1})}(\omega )\left\{
f_{2}^{(b_{1})}(\omega )-f_{2}^{(b_{2})}(\omega )\right\} \nu (\mathrm{d}%
\omega ).  \notag
\end{eqnarray}%
From relation (\ref{10}), the number inequality $\left\vert x-y\right\vert
\leq 1-xy,$ $\forall x,y\in \lbrack -1,1],$ and condition (\ref{8}) it
follows:%
\begin{eqnarray}
&&\left\vert \langle \lambda _{1}\lambda _{2}\rangle
^{(a_{1},b_{1})}-\langle \lambda _{1}\lambda _{2}\rangle
^{(a_{1},b_{2})}\right\vert  \label{11} \\
&\leq &\dint\limits_{\Omega }\left\vert f_{2}^{(b_{1})}(\omega
)-f_{2}^{(b_{2})}(\omega )\right\vert \text{ }\nu (\mathrm{d}\omega )  \notag
\\
&\leq &\dint\limits_{\Omega }\left( 1-f_{2}^{(b_{1})}(\omega
)f_{2}^{(b_{2})}(\omega )\right) \text{ }\nu (\mathrm{d}\omega )  \notag \\
&\leq &1\mp \langle \lambda _{1}\lambda _{2}\rangle ^{(a_{2},b_{2})},  \notag
\end{eqnarray}%
where the minus (or plus) sign form of condition (\ref{8}) corresponds to
the minus (or plus) sign version of relation (\ref{11}). This proves the
statement.
\end{proof}

\bigskip

At the end of this section, we show that, in a dichotomic case, the LHV
condition (\ref{8}) can be tested experimentally. Note that Bell's
assumption $\langle \lambda _{1}\lambda _{2}\rangle ^{(a_{2},b_{1})}=\pm 1$
also refers only to a dichotomic case.

According to the terminology introduced by Fine \cite{10}, a correlation LHV
model is referred to as \emph{deterministic} if the values of functions $%
f_{1}^{(a_{i})},$ $f_{2}^{(b_{k})}$, $i,k=1,2,$ coincide with Alice's and
Bob's outcomes under their corresponding measurements and \emph{stochastic}
- otherwise. If, in addition, functions $f_{1}^{(a_{i})},$ $f_{2}^{(b_{k})},$
$i,k=1,2,$ are conditioned by any extra relation, then we refer \cite{9} to
such a correlation LHV model as \emph{conditional.}

Therefore, otherwise expressed, theorem 1 reads - \emph{if a }$2\times 2$%
\emph{-setting correlation scenario, with outcomes }$\lambda _{1},\lambda
_{2}\in \lbrack -1,1]$\emph{\ of any spectral type, admits a conditional
correlation LHV model }(\ref{7}), (\ref{8})\emph{, then the original Bell
inequality }(\ref{9})\emph{\ holds.}

The original Bell inequality (\ref{9}) represents an example of a
conditional Bell-type inequality\footnote{%
For the definition of a Bell-type inequality, conditional or unconditional,
see \cite{11}.}.

We stress that the LHV condition (\ref{8}) \emph{does not, in general,} 
\emph{imply} \emph{any restriction} on a value of correlation $\langle
\lambda _{1}\lambda _{2}\rangle ^{(a_{2},b_{1})}$. This, in particular,
means that, in contrast to the claims of Simon and Zukowski in \cite{3, 4},
for the "perfect correlation" (minus sign) form of inequality (\ref{9}) to
hold in an LHV\ frame, expectation $\langle \lambda _{1}\lambda _{2}\rangle
^{(a_{2},b_{1})}$ does not need to be even positive (see example 3 in
section 4.1).

Condition (\ref{8}) is, in particular, fulfilled if 
\begin{equation}
f_{1}^{(a_{2})}(\omega )=\pm f_{2}^{(b_{1})}(\omega ),  \label{12}
\end{equation}%
$\nu $-almost everywhere\footnote{%
This terminology means that relation (\ref{12}) can be violated only on a
subset $\Omega ^{\prime }\subset \Omega $ of zero measure $\nu (\Omega
^{\prime })=0.$} (\emph{a.e}$.$) on $\Omega .$ Since, in an arbitrary LHV
model, the values of functions $f_{1}^{(a_{2})},$ $f_{2}^{(b_{1})}$ do not
need to coincide with Alice's and Bob's outcomes under their measurements
specified by settings $a_{2}$ and $b_{1},$ respectively, relation (\ref{12}) 
\emph{does not}, in general, mean the perfect correlation or anticorrelation
of Alice's and Bob's outcomes under the joint measurement $(a_{2},b_{1}).$

Below we prove (propositions 1 - 3) that, for any spectral type of outcomes
and in an LHV model of any type, the LHV condition (\ref{8}) is more general
than the assumption on perfect correlations or anticorrelations and
incorporates the latter assumption only as a particular case.

\begin{proposition}
Let a $2\times 2$ - setting bipartite correlation scenario, with outcomes $%
\lambda _{1},\lambda _{2}=\pm 1,$ admit a correlation LHV model (\ref{7})
and, under the joint measurement $(a_{2},b_{1}),$ Alice's and Bob's outcomes
be perfectly correlated or anticorrelated\footnote{%
For outcomes $\pm 1,$ relations (\ref{5}) and (\ref{13}) are equivalent.}:%
\begin{equation}
\langle \lambda _{1}\lambda _{2}\rangle ^{(a_{2},b_{1})}=\pm 1.  \label{13}
\end{equation}%
Then this LHV model is conditioned by relation (\ref{8}) and, therefore, by
theorem 1, the original Bell inequality (\ref{9}) holds in its "perfect
correlation" (minus sign) or "perfect anticorrelation" (plus sign) form,
respectively.
\end{proposition}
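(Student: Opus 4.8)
The plan is to show that the perfect correlation or anticorrelation condition (\ref{13}) forces the integrand in (\ref{8}) to vanish $\nu$-almost everywhere, which trivially gives the inequality in (\ref{8}) (as an equality, in fact). First I would recall that for two-valued outcomes $\lambda_1,\lambda_2=\pm1$, the correlation function $\langle\lambda_1\lambda_2\rangle^{(a_2,b_1)}$ lies in $[-1,1]$, and in the LHV representation (\ref{7}) it equals $\int_\Omega f_1^{(a_2)}(\omega)f_2^{(b_1)}(\omega)\,\nu(\mathrm{d}\omega)$ with both factors taking values in $[-1,1]$. Hence the pointwise product $f_1^{(a_2)}(\omega)f_2^{(b_1)}(\omega)$ also lies in $[-1,1]$, so its integral against the probability measure $\nu$ can attain the extreme value $+1$ (resp.\ $-1$) only if $f_1^{(a_2)}(\omega)f_2^{(b_1)}(\omega)=1$ (resp.\ $=-1$) for $\nu$-almost every $\omega$.

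Next I would convert this into the pointwise relation (\ref{12}). Since $|f_1^{(a_2)}(\omega)|\le1$ and $|f_2^{(b_1)}(\omega)|\le1$, the equality $f_1^{(a_2)}(\omega)f_2^{(b_1)}(\omega)=1$ is possible only when $f_1^{(a_2)}(\omega)=f_2^{(b_1)}(\omega)=1$ or $f_1^{(a_2)}(\omega)=f_2^{(b_1)}(\omega)=-1$; in either case $f_1^{(a_2)}(\omega)=f_2^{(b_1)}(\omega)$. Similarly, $f_1^{(a_2)}(\omega)f_2^{(b_1)}(\omega)=-1$ forces $f_1^{(a_2)}(\omega)=-f_2^{(b_1)}(\omega)$. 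Thus the perfect correlation case yields $f_1^{(a_2)}(\omega)=f_2^{(b_1)}(\omega)$ $\nu$-a.e., and the perfect anticorrelation case yields $f_1^{(a_2)}(\omega)=-f_2^{(b_1)}(\omega)$ $\nu$-a.e.\ --- precisely relation (\ref{12}) in its $\mp$ forms. Substituting (\ref{12}) into the integrand of (\ref{8}) makes the factor $f_2^{(b_1)}(\omega)\mp f_1^{(a_2)}(\omega)$ vanish $\nu$-a.e., so the integral in (\ref{8}) equals $0$ and the condition holds (with equality) in the corresponding sign form.

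Finally, having verified that the hypotheses of theorem 1 are met --- the scenario admits the correlation LHV model (\ref{7}) by assumption, and condition (\ref{8}) holds in the appropriate sign form by the argument above --- I would invoke theorem 1 directly to conclude that the original Bell inequality (\ref{9}) holds in its minus sign ("perfect correlation") version when $\langle\lambda_1\lambda_2\rangle^{(a_2,b_1)}=+1$, and in its plus sign ("perfect anticorrelation") version when $\langle\lambda_1\lambda_2\rangle^{(a_2,b_1)}=-1$.

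The only genuinely delicate point is the step extracting the $\nu$-a.e.\ pointwise identity from the statement that an integral attains its extreme value; this is a standard measure-theoretic fact (a nonnegative integrand with zero integral vanishes a.e.), applied here to $1-f_1^{(a_2)}(\omega)f_2^{(b_1)}(\omega)\ge0$ or $1+f_1^{(a_2)}(\omega)f_2^{(b_1)}(\omega)\ge0$, so no real obstacle arises --- the proposition is essentially a direct specialization of theorem 1 once (\ref{13}) is translated into (\ref{12}).
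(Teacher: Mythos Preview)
Your proposal is correct and follows essentially the same route as the paper: from $\int f_1^{(a_2)}f_2^{(b_1)}\,\nu(\mathrm{d}\omega)=\pm1$ with both factors in $[-1,1]$, deduce $f_1^{(a_2)}f_2^{(b_1)}=\pm1$ $\nu$-a.e., hence $f_1^{(a_2)}=\pm f_2^{(b_1)}$ $\nu$-a.e.\ (relation (\ref{12})), which yields condition (\ref{8}) and, via theorem~1, the Bell inequality (\ref{9}). Your explicit remark that (\ref{8}) then holds with equality, and your justification via the nonnegative integrand $1\mp f_1^{(a_2)}f_2^{(b_1)}$, are slightly more detailed than the paper's terse statement, but the argument is the same.
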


\begin{proof}
In view of Eqs. (\ref{7}), (\ref{13}),%
\begin{equation}
\dint f_{1}^{(a_{2})}(\omega )f_{2}^{(b_{1})}(\omega )\text{ }\nu (\mathrm{d}%
\omega )=\pm 1,  \label{14}
\end{equation}%
where $f_{1}^{(a_{2})}(\omega ),$ $f_{2}^{(b_{1})}(\omega )\in \lbrack
-1,1]. $ For functions with values in $[-1,1]$, the plus sign or the minus
sign version of Eq. (\ref{14}) correspondingly implies:%
\begin{eqnarray}
f_{1}^{(a_{2})}(\omega )f_{2}^{(b_{1})}(\omega ) &=&1\text{ \ \ \ }%
\Leftrightarrow \text{ \ \ \ }f_{1}^{(a_{2})}(\omega
)=f_{2}^{(b_{1})}(\omega )\in \{-1,1\},\text{ \ \ \ }\nu \text{-}a.e.
\label{14-1} \\
&&\text{or}  \notag \\
f_{1}^{(a_{2})}(\omega )f_{2}^{(b_{1})}(\omega ) &=&-1\text{ \ \ }%
\Leftrightarrow \text{ \ \ \ }f_{1}^{(a_{2})}(\omega
)=-f_{2}^{(b_{1})}(\omega )\in \{-1,1\},\text{ \ \ \ }\nu \text{-}a.e. 
\notag
\end{eqnarray}%
These relations mean the validity of condition (\ref{12}) and, therefore,
condition (\ref{8}).
\end{proof}

\bigskip

If Alice's and Bob's outcomes take \emph{any }values in $[-1,1],$ possibly
not discrete, then the assumption on perfect correlations or
anticorrelations under the joint measurement $(a_{2},b_{1})$ is
mathematically expressed by condition (\ref{5}) but not by Bell's
restriction (\ref{13}). For this general case, we have the following
statement (see also proposition 3 below).

\begin{proposition}
Let, for a $2\times 2$ - setting bipartite scenario, with outcomes $\lambda
_{1},\lambda _{2}\in \lbrack -1,1]$ of an arbitrary spectral type, the
averages%
\begin{equation}
\langle \lambda _{1}^{m}\lambda _{2}^{n}\rangle ^{(a_{i},b_{k})},\text{ \ \
\ }m+n\leq 2,\text{ \ \ }m,n=0,1,2,\text{ \ \ }i,k=1,2,  \label{15}
\end{equation}%
admit an LHV\ model 
\begin{equation}
\langle \lambda _{1}^{m}\lambda _{2}^{n}\rangle
^{(a_{i},b_{k})}=\dint\limits_{\Omega }\left( f_{1}^{(a_{i})}(\omega
)\right) ^{m}\left( f_{2}^{(b_{k})}(\omega )\right) ^{n}\text{ }\nu (\mathrm{%
d}\omega )  \label{16}
\end{equation}%
where functions $f_{1}^{(a_{i})}(\omega ),$ $f_{2}^{(b_{k})}(\omega )\in
\lbrack -1,1].$ If, under the joint measurement $(a_{2},b_{1}),$ Alice's and
Bob's outcomes are perfectly correlated or anticorrelated, that is:%
\begin{eqnarray}
P^{(a_{2},b_{1})}(\{\lambda _{1} &=&\lambda _{2}\})=\dint\limits_{\lambda
_{1}=\lambda _{2}}P^{(a_{2},b_{1})}(\mathrm{d}\lambda _{1}\times \mathrm{d}%
\lambda _{2})=1  \label{16-1} \\
&&\text{or}  \notag \\
P^{(a_{2},b_{1})}(\{\lambda _{1} &=&-\lambda _{2}\})=\dint\limits_{\lambda
_{1}=-\lambda _{2}}P^{(a_{2},b_{1})}(\mathrm{d}\lambda _{1}\times \mathrm{d}%
\lambda _{2})=1,  \notag
\end{eqnarray}%
respectively, then the LHV model (\ref{16}) is conditioned by condition (\ref%
{8}) and, therefore, by theorem 1, the original Bell inequality (\ref{9})
holds in its "perfect correlation" (minus sign) or "perfect anticorrelation"
(plus sign) version, respectively.
\end{proposition}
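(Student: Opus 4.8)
The plan is to show that the probabilistic perfect correlation/anticorrelation hypothesis (\ref{16-1}) forces the pointwise relation (\ref{12}) on the LHV functions $f_1^{(a_2)}$, $f_2^{(b_1)}$, after which the conclusion is immediate: as already observed right after (\ref{12}), relation (\ref{12}) implies condition (\ref{8}) in the appropriate sign form, and theorem 1 then delivers the original Bell inequality (\ref{9}). The one genuinely new ingredient is the bridge between (\ref{16-1}) and (\ref{12}), which I would obtain through the second moment of $\lambda_1\mp\lambda_2$ under the joint measurement $(a_2,b_1)$.

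Concretely, in the perfect correlation case $P^{(a_2,b_1)}(\{\lambda_1=\lambda_2\})=1$ I would first note that
\[
\langle(\lambda_1-\lambda_2)^2\rangle^{(a_2,b_1)}=\dint\limits_{\lambda_1=\lambda_2}(\lambda_1-\lambda_2)^2\,P^{(a_2,b_1)}(\mathrm{d}\lambda_1\times\mathrm{d}\lambda_2)=0,
\]
the integrand vanishing on the full-measure diagonal and being bounded on its null complement. Expanding the square, the left-hand side equals $\langle\lambda_1^2\rangle^{(a_2,b_1)}-2\langle\lambda_1\lambda_2\rangle^{(a_2,b_1)}+\langle\lambda_2^2\rangle^{(a_2,b_1)}$, a combination of the averages (\ref{15}) with $m+n\leq2$ for which the LHV representation (\ref{16}) is assumed; substituting (\ref{16}) gives
\[
0=\dint\limits_\Omega\left(f_1^{(a_2)}(\omega)-f_2^{(b_1)}(\omega)\right)^2\nu(\mathrm{d}\omega),
\]
so $f_1^{(a_2)}=f_2^{(b_1)}$ $\nu$-a.e., i.e. the plus sign form of (\ref{12}). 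The anticorrelation case $P^{(a_2,b_1)}(\{\lambda_1=-\lambda_2\})=1$ is treated identically with $(\lambda_1+\lambda_2)^2$ in place of $(\lambda_1-\lambda_2)^2$ and yields $f_1^{(a_2)}=-f_2^{(b_1)}$ $\nu$-a.e., the minus sign form of (\ref{12}). In either case (\ref{12}) makes the integrand in (\ref{8}) vanish $\nu$-a.e. in the corresponding sign form, so (\ref{8}) holds, and theorem 1 gives (\ref{9}) in the version whose sign matches that in (\ref{16-1}).

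I do not expect a real obstacle here; the only point requiring attention is that the argument uses the LHV representation (\ref{16}) for the \emph{individual} second moments $\langle\lambda_1^2\rangle^{(a_2,b_1)}$ and $\langle\lambda_2^2\rangle^{(a_2,b_1)}$, not merely for the correlation $\langle\lambda_1\lambda_2\rangle^{(a_2,b_1)}$. This is precisely why the statement assumes the joint model (\ref{16}) over all $m+n\leq2$ rather than the correlation-only model (\ref{7}) of theorem 1: without the $m=2,n=0$ and $m=0,n=2$ instances one could not convert the vanishing of $\langle(\lambda_1\mp\lambda_2)^2\rangle^{(a_2,b_1)}$ into the $L^2(\nu)$-identity that forces (\ref{12}). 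Everything else is routine bookkeeping of signs.
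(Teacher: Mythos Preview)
Your proposal is correct and follows essentially the same route as the paper: both compute $\int_\Omega (f_1^{(a_2)}\mp f_2^{(b_1)})^2\,\nu(\mathrm{d}\omega)$ via the LHV representation (\ref{16}), identify it with $\langle(\lambda_1\mp\lambda_2)^2\rangle^{(a_2,b_1)}$, and use (\ref{16-1}) to see this vanishes, forcing (\ref{12}) and hence (\ref{8}). Your closing remark on why the full $m+n\leq 2$ model is needed, rather than just (\ref{7}), is exactly the point.
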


\begin{proof}
Due to representation (\ref{16}) and the assumption on perfect correlations
(the first line of Eq. (\ref{16-1})), we have:%
\begin{eqnarray}
0 &\leq &\dint\limits_{\Omega }(f_{1}^{(a_{2})}(\omega
)-f_{2}^{(b_{1})}(\omega ))^{2}\text{ }\nu (\mathrm{d}\omega )  \label{17} \\
&=&\dint (\lambda _{1}-\lambda _{2})^{2}\text{ }P^{(a_{2},b_{1})}(\mathrm{d}%
\lambda _{1}\times \mathrm{d}\lambda _{2})  \notag \\
&=&\dint\limits_{\lambda _{1}\neq \lambda _{2}}(\lambda _{1}-\lambda
_{2})^{2}P^{(a_{2},b_{1})}(\mathrm{d}\lambda _{1}\times \mathrm{d}\lambda
_{2})  \notag \\
&\leq &4\dint\limits_{\lambda _{1}\neq \lambda _{2}}P^{(a_{2},b_{1})}(%
\mathrm{d}\lambda _{1}\times \mathrm{d}\lambda _{2})  \notag \\
&=&0.  \notag
\end{eqnarray}%
The similar relations (but with the plus sign in the first three lines and
event $\{\lambda _{1}\neq -\lambda _{2}\}$ in the third and the fourth
lines) hold in case of perfect anticorrelations. These relations imply the
validity of condition (\ref{12}), hence, condition (\ref{8}).
\end{proof}

\bigskip

Suppose now that a $2\times 2$ - setting correlation experiment admits an
LHV model\emph{\ }for joint probability distributions - in the sense that
each joint distribution $P^{(a_{i},b_{k})},$ $i,k=1,2,$ admits representation%
\begin{equation}
P^{(a_{i},b_{k})}(\mathrm{d}\lambda _{1}\times \mathrm{d}\lambda
_{2})=\dint\limits_{\Omega }P_{1}^{(a_{i})}(\mathrm{d}\lambda _{1}|\omega
)P_{2}^{(b_{k})}(\mathrm{d}\lambda _{2}|\omega )\text{ }\nu (\mathrm{d}%
\omega ),\text{ \ \ }i,k=1,2,  \label{18}
\end{equation}%
in terms of some variables $\omega \in \Omega $ and conditional probability
distributions $P_{1}^{(a_{i})}(\mathrm{\cdot }|\omega ),$ $P_{2}^{(b_{k})}(%
\mathrm{\cdot }|\omega )$, defined $\nu $-\emph{a.e}$.$ on $\Omega .$ In
this LHV model, expectations $\langle \lambda _{1}\lambda _{2}\rangle
^{(a_{i},b_{k})},$ $i,k=1,2,$ admit representation (\ref{7}) with functions%
\begin{equation}
f_{1}^{(a_{i})}(\omega ):=\dint \lambda _{1}P_{1}^{(a_{i})}(\mathrm{d}%
\lambda _{1}|\omega ),\text{ \ \ }\ f_{2}^{(b_{k})}(\omega ):=\dint \lambda
_{2}P_{2}^{(b_{k})}(\mathrm{d}\lambda _{2}|\omega ),\text{ \ \ }i,k=1,2,
\label{19}
\end{equation}%
so that condition (\ref{8}) takes the form%
\begin{equation}
\dint \lambda _{2}^{^{\prime }}(\lambda _{2}\mp \lambda _{1}^{\prime })\text{
}\mu (\mathrm{d}\lambda _{1}^{\prime }\times \mathrm{d}\lambda _{2}\times 
\mathrm{d}\lambda _{2}^{\prime })\geq 0,  \label{21}
\end{equation}%
where 
\begin{equation}
\mu (\mathrm{d}\lambda _{1}^{\prime }\times \mathrm{d}\lambda _{2}\times 
\mathrm{d}\lambda _{2}^{\prime })=\dint\limits_{\Omega }P_{1}^{(a_{2})}(%
\mathrm{d}\lambda _{1}^{\prime }|\omega )P_{2}^{(b_{1})}(\mathrm{d}\lambda
_{2}|\omega )P_{2}^{(b_{2})}(\mathrm{d}\lambda _{2}^{\prime }|\omega )\text{ 
}\nu (\mathrm{d}\omega ).  \label{22}
\end{equation}

Note that, for outcomes of any spectral type, the existence of an LHV model (%
\ref{7}) for correlation functions does not need to imply the existence of
an LHV model (\ref{18}) for joint probability distributions.

In view of Eqs. (\ref{18}) - (\ref{22}), we have the following corollary of
theorem 1.

\begin{corollary}
Let a $2\times 2$ - setting bipartite correlation scenario, with outcomes $%
\lambda _{1},\lambda _{2}\in \lbrack -1,1]$ of any spectral type, admit a
conditional LHV model (\ref{18}), (\ref{21}) for joint probability
distributions. Then the original Bell inequality (\ref{9}) holds.
\end{corollary}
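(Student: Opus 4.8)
The plan is to derive Corollary 1 as a direct specialization of Theorem 1, so the essential work is purely bookkeeping: verifying that the hypotheses of Theorem 1 are met by any scenario admitting the LHV model \eqref{18} for joint probability distributions together with the conditional constraint \eqref{21}. First I would observe that, given representation \eqref{18}, each correlation function $\langle \lambda_{1}\lambda_{2}\rangle^{(a_{i},b_{k})}$ admits the product form \eqref{7} with the functions $f_{1}^{(a_{i})},f_{2}^{(b_{k})}$ defined by \eqref{19}; this is immediate from Fubini, since $\int\lambda_{1}\lambda_{2}\,P^{(a_{i},b_{k})}(\mathrm{d}\lambda_{1}\times\mathrm{d}\lambda_{2})=\int_{\Omega}\bigl(\int\lambda_{1}P_{1}^{(a_{i})}(\mathrm{d}\lambda_{1}|\omega)\bigr)\bigl(\int\lambda_{2}P_{2}^{(b_{k})}(\mathrm{d}\lambda_{2}|\omega)\bigr)\nu(\mathrm{d}\omega)$. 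I would also note that these conditional expectations lie in $[-1,1]$ because $\lambda_{1},\lambda_{2}\in[-1,1]$ and $P_{1}^{(a_{i})}(\cdot|\omega),P_{2}^{(b_{k})}(\cdot|\omega)$ are probability measures, so the range requirement of Theorem 1 on $f_{1}^{(a_{i})},f_{2}^{(b_{k})}$ is automatic.

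Next I would show that the conditional constraint \eqref{21}, expressed through the auxiliary measure $\mu$ of \eqref{22}, is exactly condition \eqref{8} written out in the functions \eqref{19}. Plugging \eqref{19} into the integrand of \eqref{8} gives $\int_{\Omega}\bigl(\int\lambda_{2}'P_{2}^{(b_{2})}(\mathrm{d}\lambda_{2}'|\omega)\bigr)\bigl(\int\lambda_{2}P_{2}^{(b_{1})}(\mathrm{d}\lambda_{2}|\omega)\mp\int\lambda_{1}'P_{1}^{(a_{2})}(\mathrm{d}\lambda_{1}'|\omega)\bigr)\nu(\mathrm{d}\omega)$; pulling the three inner integrals together against $\nu(\mathrm{d}\omega)$ is precisely the triple integral against $\mu(\mathrm{d}\lambda_{1}'\times\mathrm{d}\lambda_{2}\times\mathrm{d}\lambda_{2}')$ of \eqref{22}, namely $\int\lambda_{2}'(\lambda_{2}\mp\lambda_{1}')\,\mu(\mathrm{d}\lambda_{1}'\times\mathrm{d}\lambda_{2}\times\mathrm{d}\lambda_{2}')$, so that \eqref{21}$\,\geq 0$ is literally \eqref{8}$\,\geq 0$. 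Hence the hypotheses of Theorem 1 — namely the product representation \eqref{7} and the sign condition \eqref{8} — both hold.

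With both hypotheses verified, I would simply invoke Theorem 1 to conclude that the original Bell inequality \eqref{9} holds, in its minus-sign version if \eqref{21} holds with the minus sign and in its plus-sign version otherwise, completing the proof. The main (very mild) obstacle is making sure the interchange of integration used to collapse the $\Omega$-integral in \eqref{8} into the $\mu$-integral in \eqref{21} is legitimate: this is justified by Fubini's theorem, since all integrands are bounded (everything in sight lies in $[-1,1]$ or is a product of such) and $\nu$ is a probability measure, so there is no integrability issue. No genuinely hard step arises — the corollary is a translation of Theorem 1 into the language of the joint-distribution LHV model, and the content is entirely in recognizing that \eqref{21}–\eqref{22} is the faithful rewriting of \eqref{8} under \eqref{18}–\eqref{19}.
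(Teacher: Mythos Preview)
Your proposal is correct and follows essentially the same approach as the paper: the paper derives the corollary directly from Theorem 1 via the observation (stated just before the corollary) that representation \eqref{18} yields \eqref{7} with the functions \eqref{19}, and that condition \eqref{8} then takes the form \eqref{21}--\eqref{22}. You have simply spelled out explicitly the Fubini/boundedness details that the paper leaves implicit in the phrase ``In view of Eqs.\ (\ref{18}) -- (\ref{22}).''
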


Let us now show that, as it is the case for correlation LHV models,
discussed above in propositions 1, 2, for an LHV model for joint probability
distributions, condition\ (\ref{21}) incorporates the assumption on perfect
correlations or anticorrelations only as a particular case.

\begin{proposition}
Let a $2\times 2$ - setting bipartite correlation scenario, with outcomes $%
\lambda _{1},\lambda _{2}\in \Lambda \subseteq \lbrack -1,1]$ of an
arbitrary spectral type, admit an LHV model (\ref{18}) for joint probability
distributions. If, under the joint measurement $(a_{2},b_{1}),$ Alice's and
Bob's outcomes are perfectly correlated or anticorrelated, i. e. assumption (%
\ref{16-1}) is fulfilled, then this LHV\ model is conditioned by the minus
sign or plus sign version of condition (\ref{21}), respectively, and,
therefore, by corollary 1, the original Bell inequality (\ref{9}) holds.
\end{proposition}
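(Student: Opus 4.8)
The plan is to reduce the statement, exactly as in the proof of proposition 2, to showing that the perfect-correlation (respectively, perfect-anticorrelation) part of assumption (\ref{16-1}) forces relation (\ref{12}) to hold $\nu$-almost everywhere for the conditional-mean functions $f_{1}^{(a_{2})},f_{2}^{(b_{1})}$ defined in (\ref{19}). Once (\ref{12}) holds in one of its two forms, the integrand of (\ref{8}) — equivalently, the integrand of (\ref{21}) after the substitution (\ref{19})--(\ref{22}) — vanishes $\nu$-a.e., so (\ref{21}) is satisfied in the matching sign form, and corollary 1 then yields (\ref{9}) in the corresponding ``perfect correlation'' or ``perfect anticorrelation'' version.

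To carry this out I would fix the joint measurement $(a_{2},b_{1})$ and work, for $\nu$-almost every $\omega$, with the product probability measure $P_{1}^{(a_{2})}(\cdot\,|\omega)\otimes P_{2}^{(b_{1})}(\cdot\,|\omega)$ on pairs $(\lambda _{1},\lambda _{2})$. Applying Jensen's inequality to the convex function $t\mapsto t^{2}$ against this measure gives the pointwise bound
\[
\bigl(f_{1}^{(a_{2})}(\omega )-f_{2}^{(b_{1})}(\omega )\bigr)^{2}\leq \dint \dint (\lambda _{1}-\lambda _{2})^{2}\,P_{1}^{(a_{2})}(\mathrm{d}\lambda _{1}|\omega )P_{2}^{(b_{1})}(\mathrm{d}\lambda _{2}|\omega ).
\]
Integrating this against $\nu $ and using the LHV representation (\ref{18}) for $P^{(a_{2},b_{1})}$ together with Fubini's theorem, I obtain
\[
0\leq \dint\limits_{\Omega }\bigl(f_{1}^{(a_{2})}(\omega )-f_{2}^{(b_{1})}(\omega )\bigr)^{2}\nu (\mathrm{d}\omega )\leq \dint \dint (\lambda _{1}-\lambda _{2})^{2}\,P^{(a_{2},b_{1})}(\mathrm{d}\lambda _{1}\times \mathrm{d}\lambda _{2}).
\]
Since $\lambda _{1},\lambda _{2}\in \lbrack -1,1]$, we have $(\lambda _{1}-\lambda _{2})^{2}\leq 4$, so the rightmost integral is at most $4\,P^{(a_{2},b_{1})}(\{\lambda _{1}\neq \lambda _{2}\})$, which is $0$ by the perfect-correlation hypothesis in (\ref{16-1}). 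Hence $\int_{\Omega }(f_{1}^{(a_{2})}-f_{2}^{(b_{1})})^{2}\nu (\mathrm{d}\omega )=0$, i.e. $f_{1}^{(a_{2})}(\omega )=f_{2}^{(b_{1})}(\omega )$ $\nu $-a.e., which is (\ref{12}) with the plus sign and makes the minus-sign form of (\ref{8}), hence of (\ref{21}), hold. The perfect-anticorrelation case is identical after replacing $\lambda _{1}-\lambda _{2}$ by $\lambda _{1}+\lambda _{2}$ and $\{\lambda _{1}\neq \lambda _{2}\}$ by $\{\lambda _{1}\neq -\lambda _{2}\}$, giving $f_{1}^{(a_{2})}=-f_{2}^{(b_{1})}$ $\nu $-a.e. and the plus-sign form of (\ref{21}); corollary 1 then applies in each case.

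The only place where the argument genuinely differs from the proof of proposition 2 — and hence the only step needing care — is that here the LHV model is posited for the joint distributions (\ref{18}) rather than for the second-order moments as in (\ref{16}), so $\int_{\Omega }(f_{1}^{(a_{2})}-f_{2}^{(b_{1})})^{2}\nu $ is no longer \emph{equal} to $\int (\lambda _{1}-\lambda _{2})^{2}P^{(a_{2},b_{1})}$ but only bounded above by it, the gap being the $\nu $-averaged conditional variances of Alice's and Bob's outcomes. The proof works precisely because this Jensen gap points in the favourable direction: the vanishing of the observable quantity $\int (\lambda _{1}-\lambda _{2})^{2}P^{(a_{2},b_{1})}$ still forces the hidden-variable quantity to vanish. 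The remaining checks — measurability of $\omega \mapsto f_{i}^{(\cdot )}(\omega )$ and the interchange of the $\omega $- and $(\lambda _{1},\lambda _{2})$-integrations via Fubini — are routine, since every integrand involved is bounded on a probability space.
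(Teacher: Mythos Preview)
Your proof is correct, but it follows a different route from the paper's. The paper works directly with the measure $\mu$ of (\ref{22}): using $|\lambda_{2}'|\leq 1$ it bounds
\[
\Bigl|\dint \lambda_{2}'(\lambda_{2}-\lambda_{1}')\,\mu(\mathrm{d}\lambda_{1}'\times\mathrm{d}\lambda_{2}\times\mathrm{d}\lambda_{2}')\Bigr|
\leq \dint |\lambda_{1}'-\lambda_{2}|\,\mu(\mathrm{d}\lambda_{1}'\times\mathrm{d}\lambda_{2}\times\Lambda)
=\dint |\lambda_{1}'-\lambda_{2}|\,P^{(a_{2},b_{1})}(\mathrm{d}\lambda_{1}'\times\mathrm{d}\lambda_{2}),
\]
and the last integral vanishes by the perfect-correlation hypothesis; so (\ref{21}) holds with equality. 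No Jensen step and no passage through (\ref{12}) are needed. Your argument instead adapts the template of Proposition~2: you use Jensen's inequality to push the observable bound $\int(\lambda_{1}-\lambda_{2})^{2}\,\mathrm{d}P^{(a_{2},b_{1})}=0$ down to the hidden-variable level, obtaining the stronger conclusion $f_{1}^{(a_{2})}=\pm f_{2}^{(b_{1})}$ $\nu$-a.e., from which (\ref{8}) and hence (\ref{21}) follow trivially. The paper's approach is shorter and more direct; yours extracts more information (the pointwise identity (\ref{12}) for the conditional means) and has the virtue of unifying Propositions~2 and~3 under a single scheme, the Jensen gap accounting precisely for the difference between the moment model (\ref{16}) and the distributional model (\ref{18}).
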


\begin{proof}
From Eqs. (\ref{21}), (\ref{22}) and the assumption on perfect correlations
(the first line of Eq. (\ref{16-1})) it follows:%
\begin{eqnarray}
0 &\leq &\left\vert \dint \lambda _{2}^{^{\prime }}(\lambda _{2}-\lambda
_{1}^{\prime })\text{ }\mu (\mathrm{d}\lambda _{1}^{\prime }\times \mathrm{d}%
\lambda _{2}\times \mathrm{d}\lambda _{2}^{\prime })\right\vert  \label{23}
\\
&\leq &\dint \left\vert \lambda _{1}^{\prime }-\lambda _{2}\right\vert \text{
}\mu (\mathrm{d}\lambda _{1}^{\prime }\times \mathrm{d}\lambda _{2}\times
\Lambda )  \notag \\
&=&\dint\limits_{\lambda _{1}^{\prime }\neq \lambda _{2}}\left\vert \lambda
_{1}^{\prime }-\lambda _{2}\right\vert \text{ }P^{(a_{2},b_{1})}(\mathrm{d}%
\lambda _{1}^{\prime }\times \mathrm{d}\lambda _{2})  \notag \\
&\leq &2\dint\limits_{\lambda _{1}^{\prime }\neq \lambda
_{2}}P^{(a_{2},b_{1})}(\mathrm{d}\lambda _{1}^{\prime }\times \mathrm{d}%
\lambda _{2})  \notag \\
&=&0.  \notag
\end{eqnarray}%
These relations imply the validity of the minus sign version of condition (%
\ref{21}). The similar relations (but with the plus sign in the first three
lines and event $\{\lambda _{1}^{\prime }\neq -\lambda _{2}\}$ in the third
and fourth lines) hold in case of perfect anticorrelations. This proves the
statement.
\end{proof}

\bigskip

From propositions 1, 2, 3 it follows that, in an LHV model of any type, the
assumption on perfect correlations or anticorrelations \emph{implies} the
validity of the general LHV condition (\ref{8}). The converse of this
statement is not, however, true.

In order to show this, let us consider a \emph{dichotomic }bipartite case%
\footnote{%
In a dichotomic case, the existence of an LHV model for correlation
functions implies the existence of an LHV model for joint probability
distributions, see theorem 3 in section 4.1 of \cite{9}.} with outcomes $\pm
1.$ Since $(\lambda _{1}^{\prime })^{2}=1,$ we have the following expression
for the left-hand side of, for example, the minus sign form of condition (%
\ref{21}):%
\begin{eqnarray}
&&\dint \lambda _{2}^{^{\prime }}(\lambda _{2}-\lambda _{1}^{\prime })\text{ 
}\mu (\mathrm{d}\lambda _{1}^{\prime }\times \mathrm{d}\lambda _{2}\times 
\mathrm{d}\lambda _{2}^{\prime })  \label{cv} \\
&=&\dint \lambda _{1}^{\prime }\lambda _{2}^{^{\prime }}(\lambda
_{1}^{\prime }\lambda _{2}-1)\text{ }\mu (\mathrm{d}\lambda _{1}^{\prime
}\times \mathrm{d}\lambda _{2}\times \mathrm{d}\lambda _{2}^{\prime }) 
\notag \\
&=&2\mu (\{\lambda _{1}^{\prime }\lambda _{2}=-1\})-4\mu (\{\lambda
_{1}^{\prime }\lambda _{2}=-1,\text{ }\lambda _{1}^{\prime }\lambda
_{2}^{\prime }=1\}).  \notag
\end{eqnarray}%
Taking into account that 
\begin{eqnarray}
\mu (\{\lambda _{1}^{\prime }\lambda _{2} &=&-1,\text{ }\lambda _{1}^{\prime
}\lambda _{2}^{\prime }=1\})\leq \mu (\{\lambda _{1}^{\prime }\lambda
_{2}=-1\}),  \label{sb} \\
\mu (\{\lambda _{1}^{\prime }\lambda _{2} &=&-1,\text{ }\lambda _{1}^{\prime
}\lambda _{2}^{\prime }=1\})\leq \mu (\{\lambda _{1}^{\prime }\lambda
_{2}^{\prime }=1\}),  \notag \\
\mu (\{\lambda _{1}^{\prime }\lambda _{2} &=&-1,\text{ }\lambda _{1}^{\prime
}\lambda _{2}^{\prime }=1\})\leq \sqrt{\mu (\{\lambda _{1}^{\prime }\lambda
_{2}=-1\})\text{ }\mu (\{\lambda _{1}^{\prime }\lambda _{2}^{\prime }=1\})},
\notag
\end{eqnarray}%
and substituting these relations into Eq. (\ref{cv}), we derive:%
\begin{eqnarray}
&&\dint \lambda _{2}^{^{\prime }}(\lambda _{2}-\lambda _{1}^{\prime })\text{ 
}\mu (\mathrm{d}\lambda _{1}^{\prime }\times \mathrm{d}\lambda _{2}\times 
\mathrm{d}\lambda _{2}^{\prime })  \label{w} \\
&\geq &2\mu (\{\lambda _{1}^{\prime }\lambda _{2}=-1\})-4\mu (\{\lambda
_{1}^{\prime }\lambda _{2}^{\prime }=1\})  \notag
\end{eqnarray}%
and 
\begin{eqnarray}
&&\dint \lambda _{2}^{^{\prime }}(\lambda _{2}-\lambda _{1}^{\prime })\text{ 
}\mu (\mathrm{d}\lambda _{1}^{\prime }\times \mathrm{d}\lambda _{2}\times 
\mathrm{d}\lambda _{2}^{\prime })  \label{www} \\
&\geq &2\sqrt{\mu (\{\lambda _{1}^{\prime }\lambda _{2}=-1\})}  \notag \\
&&\times (\sqrt{\mu (\{\lambda _{1}^{\prime }\lambda _{2}=-1\})}-2\sqrt{\mu
(\{\lambda _{1}^{\prime }\lambda _{2}^{\prime }=1\})}\text{ }).  \notag
\end{eqnarray}%
Therefore, if either of conditions 
\begin{equation}
\mu (\{\lambda _{1}^{\prime }\lambda _{2}=-1\})\geq 2\mu (\{\lambda
_{1}^{\prime }\lambda _{2}^{\prime }=1\})\text{ \ \ or \ \ }\mu (\{\lambda
_{1}^{\prime }\lambda _{2}=-1\})=0  \label{1w}
\end{equation}%
is fulfilled, then the minus sign version of the general LHV condition (\ref%
{21}) holds. Since from Eq (\ref{22}) it follows that 
\begin{eqnarray}
\mu (\{\lambda _{1}^{\prime }\lambda _{2} &=&-1\})\equiv
P^{(a_{2},b_{1})}(\{\lambda _{1}^{\prime }\lambda _{2}=-1\}), \\
\mu (\{\lambda _{1}^{\prime }\lambda _{2}^{\prime } &=&1\})\equiv
P^{(a_{2},b_{2})}(\{\lambda _{1}^{\prime }\lambda _{2}^{\prime }=1\}), 
\notag
\end{eqnarray}%
in terms of joint probabilities, conditions (\ref{1w}) read:%
\begin{eqnarray}
P^{(a_{2},b_{1})}(\{\lambda _{1}\lambda _{2}
&=&1\})+2P^{(a_{2},b_{2})}(\{\lambda _{1}\lambda _{2}=1\})\leq 1  \label{ww}
\\
&&\text{or \ \ }  \notag \\
P^{(a_{2},b_{1})}(\{\lambda _{1}\lambda _{2} &=&1\})=1,  \notag
\end{eqnarray}%
respectively.

Thus, if, under the joint measurements $(a_{2},b_{1})$ and $(a_{2},b_{2}),$
the probabilities of event $\{\lambda _{1}\lambda _{2}=1\}$ satisfy \emph{%
either} of conditions in (\ref{ww}), then the minus sign form of the LHV
condition (\ref{21}) is fulfilled so that, according to corollary 1, the
"perfect correlation" version 
\begin{equation}
\left\vert \langle \lambda _{1}\lambda _{2}\rangle ^{(a_{1},b_{1})}-\langle
\lambda _{1}\lambda _{2}\rangle ^{(a_{1},b_{2})}\right\vert \leq 1-\langle
\lambda _{1}\lambda _{2}\rangle ^{(a_{2},b_{2})}
\end{equation}%
of the original Bell inequality (\ref{9}) holds.

Quite similarly, if, under the joint measurements $(a_{2},b_{1})$ and $%
(a_{2},b_{2}),$ the probabilities of event $\{\lambda _{1}\lambda _{2}=-1\}$
satisfy either of conditions%
\begin{eqnarray}
P^{(a_{2},b_{1})}(\{\lambda _{1}\lambda _{2}
&=&-1\})+2P^{(a_{2},b_{2})}(\{\lambda _{1}\lambda _{2}=-1\})\leq 1
\label{ww'} \\
&&\text{or \ \ }  \notag \\
P^{(a_{2},b_{1})}(\{\lambda _{1}\lambda _{2} &=&-1\})=1,  \notag
\end{eqnarray}%
then the "perfect anticorrelation" version 
\begin{equation}
\left\vert \langle \lambda _{1}\lambda _{2}\rangle ^{(a_{1},b_{1})}-\langle
\lambda _{1}\lambda _{2}\rangle ^{(a_{1},b_{2})}\right\vert \leq 1+\langle
\lambda _{1}\lambda _{2}\rangle ^{(a_{2},b_{2})}
\end{equation}%
of the original Bell inequality (\ref{9}) holds.

We stress that \emph{all conditions in }(\ref{ww}), (\ref{ww'})\emph{\ can
be tested experimentally and that only the second condition in }(\ref{ww})%
\emph{\ and the second condition in }(\ref{ww'})\emph{\ mean,
correspondingly, perfect correlations and perfect anticorrelations under the
joint measurement }$(a_{2},b_{1}).$

In the following sections, we introduce classical and quantum correlation
scenarios where the general LHV condition (\ref{8}) and, therefore, the
original Bell inequality (\ref{9}) are fulfilled for the whole range of
measurement settings.

\section{Classical bipartite case}

As an application of our results derived in section 2, consider the
probabilistic description of a $2\times 2$ - setting bipartite correlation
scenario, where every joint measurement $(a_{i},b_{k}),$ $i,k=1,2,$ is
performed on the same, identically prepared pair of classical particles,
each observed at one of sites and not interacting with each other during a
joint measurement. Let also a measurement device of each party do not affect
a measurement device and a particle observed at another site, and results of
each joint measurement do not in any way disturb results of other joint
measurements. Due to this physical setting, the considered classical
correlation scenario is \emph{local} in the Einstein-Podolsky-Rosen (EPR) 
\cite{12} sense\footnote{%
On the mathematical formulation of \emph{the EPR locality}, see section 3 in 
\cite{9} and also section 4 in \cite{13}.}.

In a classical EPR local bipartite case\footnote{%
See section 3.1 in \cite{9}.}, a state of a bipartite classical system
before measurements is represented by a probability distribution $\pi $ of
some \emph{system's} \emph{variables} $\theta \in \Theta $ such that,\emph{\ 
}for any joint measurement $(a_{i},b_{k})$ performed on this classical
system in a state $\pi ,$ distribution $P_{_{\pi }}^{(a_{i},b_{k})}$ has the
factorizable form%
\begin{equation}
P_{\pi }^{(a_{i},b_{k})}(\mathrm{d}\lambda _{1}\times \mathrm{d}\lambda
_{2})=\dint\limits_{\Theta }P_{1}^{(a_{i})}(\mathrm{d}\lambda _{1}|\theta
)P_{2}^{(b_{k})}(\mathrm{d}\lambda _{2}|\theta )\pi (\mathrm{d}\theta ),
\label{23.2}
\end{equation}%
reducing to an image (\ref{23.6}) of distribution $\pi $ if Alice's and
Bob's classical measurements are \emph{ideal,} i.e. describe a measured
system property without an error. Due to the EPR locality of the considered
classical correlation experiment, each of conditional distributions $%
P_{1}^{(a_{i})}(\mathrm{\cdot }|\theta ),$ $P_{2}^{(b_{k})}(\mathrm{\cdot }%
|\theta )$ depends only on a setting of the corresponding measurement at the
corresponding site.

Substituting representation (\ref{23.2}) into Eqs. (\ref{3}), (\ref{4}), we
get the following expressions for averages of Alice's $(n=1)$ and Bob's $%
(n=2)$ outcomes:%
\begin{eqnarray}
\langle \lambda _{1}\rangle _{\pi }^{(a_{i},b_{1})} &=&\langle \lambda
_{1}\rangle _{\pi }^{(a_{i},b_{2})}=\dint\limits_{\Theta }A_{i}(\theta )\pi (%
\mathrm{d}\theta ):=\langle \lambda _{1}\rangle _{\pi }^{(A_{i})},\text{ \ \ 
}i=1,2,  \label{23.3} \\
\langle \lambda _{2}\rangle _{\pi }^{(a_{1},b_{k})} &=&\langle \lambda
_{2}\rangle _{\pi }^{(a_{2},b_{k})}=\dint\limits_{\Theta }B_{k}(\theta )\pi (%
\mathrm{d}\theta ):=\langle \lambda _{2}\rangle _{\pi }^{(B_{k})},\text{ \ \ 
}k=1,2,  \notag
\end{eqnarray}%
and the product expectations 
\begin{equation}
\langle \lambda _{1}\lambda _{2}\rangle _{\pi
}^{(a_{i},b_{k})}=\dint\limits_{\Theta }A_{i}(\theta )B_{k}(\theta )\pi (%
\mathrm{d}\theta ):=\langle \lambda _{1}\lambda _{2}\rangle _{\pi
}^{(A_{i},B_{k})},\text{ \ \ }i,k=1,2,  \label{23.4}
\end{equation}%
in terms of classical observables%
\begin{eqnarray}
A_{i}(\theta ) &:&=\langle \lambda _{1}|\theta \rangle _{\pi
}^{(a_{i})}=\dint \lambda _{1}P_{1}^{(a_{i})}(\mathrm{d}\lambda _{1}|\theta
)\in \lbrack -1,1],  \label{23.5} \\
B_{k}(\theta ) &:&=\langle \lambda _{2}|\theta \rangle _{\pi
}^{(b_{k})}=\dint \lambda _{2}P_{2}^{(b_{k})}(\mathrm{d}\lambda _{2}|\theta
)\in \lbrack -1,1].  \notag
\end{eqnarray}

If classical measurements, specified by settings $a_{i}$ and $b_{k}$, are 
\emph{ideal}, then values of observables $A_{i}$ and $B_{k}$ coincide with
Alice's and Bob's outcomes under these measurements, while the joint
distribution $P_{\pi ,\text{ideal}}^{(a_{i},b_{k})}$ has the image form 
\begin{equation}
P_{\pi ,\text{ideal}}^{(a_{i},b_{k})}(D_{1}\times D_{2})=\pi
(A_{i}^{-1}(D_{1})\cap B_{k}^{-1}(D_{2})),\ \ \ \forall D_{1},D_{2}\subseteq
\lbrack -1,1].  \label{23.6}
\end{equation}%
Here, $A^{-1}(D):=\{\theta \in \Theta \mid A(\theta )\in D\}$ denotes the
preimage of a subset $D\subseteq \lbrack -1,1]$ under mapping $A:\Theta
\rightarrow \lbrack -1,1].$ Similarly, for notation $B^{-1}(D).$

If classical measurements $a_{i}$ and $b_{k}$ are \emph{non-ideal}\footnote{%
A non-ideal classical measurement is called \emph{randomized}.}, then
observables $A_{i}$ and $B_{k}$ describe these measurements only \emph{in
average }-\emph{\ }in the sense that values $A_{i}(\theta ),$ $B_{k}(\theta
) $ of these observables constitute not Alice's and Bob's outcomes but only
their averages $\langle \lambda _{1}|\theta \rangle _{\pi }^{(a_{i})},$ $%
\langle \lambda _{2}|\theta \rangle _{\pi }^{(b_{k})}$ for a certain initial 
$\theta \in \Theta .$

Thus, the statistical context of our notation $\langle \lambda _{1}\lambda
_{2}\rangle _{\pi }^{(A_{i},B_{k})}$\emph{\ is, in general, different }from
the context of notation $E_{\pi }(A_{i}B_{k})$ used in conventional
probability theory for the expected value of the product of random variables.

From representations (\ref{23.2}), (\ref{23.4}) it follows that any EPR
local $2\times 2$ - setting correlation scenario on a classical state $\pi $
admits the LHV model (\ref{18}) for joint probability distributions and,
hence, the LHV model (\ref{7}) for correlation functions. Therefore, if
classical observables (\ref{23.5}) satisfy condition (\ref{8}), then by
theorem 1 the original Bell inequality (\ref{9}) holds and, in view of
notation (\ref{23.4}), reads:%
\begin{equation}
\left\vert \langle \lambda _{1}\lambda _{2}\rangle _{\pi
}^{(A_{1},B_{1})}-\langle \lambda _{1}\lambda _{2}\rangle _{\pi
}^{(A_{1},B_{2})}\right\vert \leq 1\mp \langle \lambda _{1}\lambda
_{2}\rangle _{\pi }^{(A_{2},B_{2})}.  \label{23.7}
\end{equation}

If classical observables corresponding to settings $a_{2}$ and $b_{1}$
coincide:$A_{2}=B_{1},$ then the minus sign form of condition (\ref{8})
holds. Therefore, an arbitrary classical state $\pi $ satisfies the "perfect
correlation" version%
\begin{equation}
\left\vert \langle \lambda _{1}\lambda _{2}\rangle _{\pi
}^{(A_{1},B_{1})}-\langle \lambda _{1}\lambda _{2}\rangle _{\pi
}^{(A_{1},B_{2})}\right\vert \leq 1-\langle \lambda _{1}\lambda _{2}\rangle
_{\pi }^{(B_{1},B_{2})}  \label{23.8}
\end{equation}%
of the original Bell inequality (\ref{9}) for any three classical
observables $A_{1},$ $A_{2}=B_{1},$ $B_{2}$, measured, possibly \emph{in
average}, at sites of Alice and Bob and having values in $[-1,1]$.

If a joint classical measurement $(a_{2},b_{1})$ on a state $\pi $ is ideal
and the corresponding classical observables coincide: $A_{2}=B_{1}$, then,
due to Eq. (\ref{23.6}), the outcome event $\{\lambda _{1}=\lambda _{2}\}$
is observed with certainty%
\begin{equation}
P_{\pi ,ideal}^{(B_{1},B_{1})}(\{\lambda _{1}=\lambda _{2}\}):=P_{\pi
,ideal}^{(a_{2},b_{1})}(\{\lambda _{1}=\lambda _{2}\})|_{_{A_{2}=B_{1}}}=1.
\label{23.9}
\end{equation}%
Hence, under an \emph{ideal} joint classical measurement of the same
classical observable at both sites, Alice's and Bob's outcomes are
necessarily perfectly correlated for any initial classical state $\pi .$

However, if classical measurements of Alice and Bob on a state $\pi $ are 
\emph{non-ideal}, then a classical state $\pi $ \emph{does not need }to
exhibit\emph{\ }perfect correlations whenever the same classical observable
is \emph{in average} measured at both sites.

Thus, the "perfect correlation" version (\ref{23.8}) of the original Bell
inequality is fulfilled for any classical state $\pi $ and any three
classical observables $A_{1},$ $A_{2}=B_{1},$ $B_{2}$ measured ideally or
non-ideally at sites of Alice and Bob but the condition on perfect
correlations necessarily holds only in case of ideal classical measurements
of observable $B_{1}$ at both sites.

Consider now a possible \emph{physical context }of a correlation scenario on
a classical state $\pi $ modeled at Alice's and Bob's sites by only three
classical observables $A_{1},$ $A_{2}=B_{1},$ $B_{2}.$ Let Alice/Bob joint
measurements be performed by identical measurement devices with settings $%
a_{1},$ $a_{2}=b_{1},$ $b_{2}$ and upon a pair of classical particles, that
are indistinguishable, identically prepared and do not interact with each
other during measurements. In this case, due to the \emph{physical
indistinguishability} of sites and measurements of Alice and Bob specified
by the same setting $a_{2}=b_{1},$ these measurements are described by the
same conditional averages $\langle \lambda _{1}|\theta \rangle _{\pi
}^{(a_{2}=b_{1})}=\langle \lambda _{2}|\theta \rangle _{\pi }^{(b_{1})}$ for
any initial $\theta \in \Theta ,$ and, therefore, in view of relations (\ref%
{23.5}) should be in average modeled by the same classical observable $%
A_{2}=B_{1}$ at both sites.

\section{Quantum bipartite case}

Consider now the probabilistic description of a quantum EPR$\ $local $%
2\times 2$ - setting correlation scenario, where any of joint measurements $%
(a_{i},b_{k}),$ $i,k=1,2$, is performed on the same identically prepared
pair of \emph{quantum} particles and the physical context of this quantum
scenario is similar to that of a \emph{classical} EPR$\ $local scenario
discussed in section 3 - with the only substitution of term "classical
particles" by term "quantum particles".

In a quantum EPR local bipartite case\footnote{%
See section 3.1 in \cite{9}.}, a state of a bipartite quantum system before
measurements is represented by a density operator $\rho $ on a complex
separable Hilbert space $\mathcal{H}_{1}\otimes \mathcal{H}_{2}$, possibly
infinitely dimensional, and, for any joint measurement $(a_{i},b_{k})$
performed on this quantum system in a state $\rho $, the joint probability
distribution $P_{\rho }^{(a_{i},b_{k})}$ takes the form%
\begin{equation}
P_{\rho }^{(a_{i},b_{k})}(\mathrm{d}\lambda _{1}\times \mathrm{d}\lambda
_{2})=\mathrm{tr}[\rho \{\mathrm{M}_{1}^{(a_{i})}(\mathrm{d}\lambda
_{1})\otimes \mathrm{M}_{2}^{(b_{k})}(\mathrm{d}\lambda _{2})\}],  \label{24}
\end{equation}%
where $\mathrm{M}_{1}^{(a_{i})},$ $\mathrm{M}_{2}^{(b_{k})}$ are positive
operator-valued (POV) measures, representing on $\mathcal{H}_{1}$ and $%
\mathcal{H}_{2}$ the corresponding quantum measurements of Alice and Bob.
Due to the EPR locality of the considered quantum correlation experiment,
each of these POV measures depends only on a setting of the corresponding
measurement at the corresponding site.

Substituting Eq. (\ref{24}) into Eqs. (\ref{3}), (\ref{4}), we get the
following expressions for averages of Alice's $(n=1)$ and Bob's $(n=2)$
outcomes: 
\begin{eqnarray}
\langle \lambda _{1}\rangle _{\rho }^{(a_{i},b_{1})} &=&\langle \lambda
_{1}\rangle _{\rho }^{(a_{i},b_{2})}=\mathrm{tr}[\rho \{A_{i}\otimes \mathbb{%
I}_{\mathcal{H}_{2}}\mathbb{\}]}:=\langle \lambda _{1}\rangle _{\rho
}^{(A_{i})},\mathbb{\ \ \ \ }i=1,2,  \label{24-1} \\
&&  \notag \\
\langle \lambda _{2}\rangle _{\rho }^{(a_{1},b_{k})} &=&\langle \lambda
_{2}\rangle _{\rho }^{(a_{2},b_{k})}=\mathrm{tr}[\rho \{\mathbb{I}_{\mathcal{%
H}_{1}}\mathbb{\otimes }B_{k}\mathbb{\}]}:=\langle \lambda _{2}\rangle
_{\rho }^{(B_{k})},\mathbb{\ \ \ \ }k=1,2,  \notag
\end{eqnarray}%
and the product expectations%
\begin{eqnarray}
\langle \lambda _{1}\lambda _{2}\rangle _{\rho }^{(a_{i},b_{k})} &=&\dint
\lambda _{1}\lambda _{2}\mathrm{tr}[\rho \{\mathrm{M}_{1}^{(a_{i})}(\mathrm{d%
}\lambda _{1})\otimes \mathrm{M}_{2}^{(b_{k})}(\mathrm{d}\lambda _{2})\}]
\label{25} \\
&=&\mathrm{tr}[\rho \{A_{i}\otimes B_{k}\}]:=\langle \lambda _{1}\lambda
_{2}\rangle _{\rho }^{(A_{i},B_{k})},\ \ \ \ i,k=1,2,  \notag
\end{eqnarray}%
in terms of quantum observables%
\begin{equation}
A_{i}:=\dint \lambda _{1}\mathrm{M}_{1}^{(a_{i})}(\mathrm{d}\lambda _{1}),%
\text{ \ \ \ }B_{k}:=\dint \lambda _{2}\mathrm{M}_{2}^{(b_{k})}(\mathrm{d}%
\lambda _{2})  \label{26}
\end{equation}%
with eigenvalues in $[-1,1]$, on $\mathcal{H}_{1}$ and $\mathcal{H}_{2},$
respectively.

If Alice's and Bob's quantum measurements specified by settings $a_{i}$ and $%
b_{k}$ are \emph{ideal}, then eigenvalues of observables $A_{i}$ and $B_{k}$
coincide with Alice's and Bob's outcomes under these measurements while the
POV measures $\mathrm{M}_{1}^{(a_{i})},$ $\mathrm{M}_{2}^{(b_{k})}$
describing these measurements are given by projection-valued\footnote{%
Due to this, an ideal quantum measurement is often referred to as projective.%
} measures $\mathrm{P}_{A_{i}}$, $\mathrm{P}_{B_{k}},$ uniquely
corresponding to quantum observables $A_{i}$, $B_{k}$ due to the spectral
theorem. If quantum measurements $a_{i}$ and $b_{k}$ are \emph{non-ideal}%
\footnote{%
In a quantum case, an non-ideal measurement is otherwise referred to as 
\emph{generalized}.}, then observables (\ref{26}) describe these quantum
measurements only \emph{in average} - in the sense of representations (\ref%
{24-1}), (\ref{25}).

For Alice/Bob joint measurements on a bipartite quantum state $\rho $ on $%
\mathcal{H}_{1}\otimes \mathcal{H}_{2},$ consider a possibility of a
conditional LHV simulation specified in section 2.

We recall \cite{6, 7} that, for any state $\rho $ on $\mathcal{H}_{1}\otimes 
\mathcal{H}_{2},$ there exist self-adjoint trace class operators $%
T_{\blacktriangleright }$ on $\mathcal{H}_{1}\otimes \mathcal{H}_{2}\otimes 
\mathcal{H}_{2}$ and $T_{\blacktriangleleft }$ on $\mathcal{H}_{1}\otimes 
\mathcal{H}_{1}\otimes \mathcal{H}_{2},$ not necessarily positive, such that 
\begin{equation}
\mathrm{tr}_{\mathcal{H}_{2}}^{(2)}[T_{\blacktriangleright }]=\mathrm{tr}_{%
\mathcal{H}_{2}}^{(3)}[T_{\blacktriangleright }]=\rho ,\text{ \ \ \ }\mathrm{%
tr}_{\mathcal{H}_{1}}^{(1)}[T_{\blacktriangleleft }]=\mathrm{tr}_{\mathcal{H}%
_{1}}^{(2)}[T_{\blacktriangleleft }]=\rho .  \label{29}
\end{equation}%
Here, the below indices of $T$ point to a direction of extension of a
Hilbert space $\mathcal{H}_{1}\otimes \mathcal{H}_{2},$ and notation $%
\mathrm{tr}_{\mathcal{H}_{m}}^{(k)}[\cdot ]$ means the partial trace over
the elements of a Hilbert space $\mathcal{H}_{m}$, $m=1,$ $2,$ standing in
the $k$-th place of tensor products. In \cite{6, 7}, we\ refer to any of
these dilations as \emph{a} \emph{source operator} for a bipartite state $%
\rho .$ For each source operator, $\mathrm{tr}[T]=1.$

Introduce also the following new notion. We call a bounded operator $Z\neq 0$
on $\mathcal{H}_{1}\otimes ...\otimes \mathcal{H}_{N}$ as \emph{%
tensor-positive\ }and denote it by $Z\overset{\otimes }{>}0$ if the scalar
product%
\begin{equation}
(\psi _{1}\otimes ...\otimes \psi _{N},\text{ }Z\psi _{1}\otimes ...\otimes
\psi _{N})\geq 0,  \label{30}
\end{equation}%
for any vectors $\psi _{1}\in \mathcal{H}_{1},...,\psi _{N}\in \mathcal{H}%
_{N}.$ Any positive operator is, of course, tensor-positive but the converse
is not true. Due to Eqs. (\ref{30}) and the spectral decomposition of a
positive\footnote{%
On a complex separable Hilbert space, every positive operator is
self-adjoint.} operator on a complex separable Hilbert space, the relation 
\begin{equation}
\mathrm{tr}[Z\{W_{1}\otimes ...\otimes W_{N}\}]\geq 0  \label{30-1}
\end{equation}%
holds for any tensor-positive $Z\overset{\otimes }{>}0$ and any non-negative
bounded operators $W_{n}\geq 0,$ $n=1,..,N,$ each defined on the
corresponding complex separable Hilbert space $\mathcal{H}_{n}.$

The following statement (proved in appendix) specifies the general LHV
condition (\ref{8}) in quantum terms.

\begin{theorem}
Let a quantum $2\times 2$ - setting bipartite correlation scenario, with
outcomes $\lambda _{1},\lambda _{2}\in \Lambda \subseteq \lbrack -1,1]$ of
any spectral type, be specified by Eqs. (\ref{24}) - (\ref{26}) and
performed on a quantum state $\rho $ on $\mathcal{H}_{1}\otimes \mathcal{H}%
_{2}$, that has a tensor-positive source operator $R_{\blacktriangleright }%
\overset{\otimes }{>}0$ on $\mathcal{H}_{1}\otimes \mathcal{H}_{2}\otimes 
\mathcal{H}_{2}$. Then this quantum correlation experiment admits an LHV
model (\ref{18}) for joint probability distributions. If, in addition, 
\begin{equation}
\mathrm{tr}[\sigma _{R_{\blacktriangleright }}\{B_{1}\otimes B_{2}\}]\mp 
\mathrm{tr[}\rho \{A_{2}\otimes B_{2}\}]\geq 0  \label{32-1}
\end{equation}%
where 
\begin{equation}
\sigma _{R_{\blacktriangleright }}:=\mathrm{tr}_{\mathcal{H}%
_{1}}^{(1)}[R_{\blacktriangleright }]  \label{x}
\end{equation}%
is a density operators on $\mathcal{H}_{2}\otimes \mathcal{H}_{2}$, then
this quantum correlation experiment admits the conditional LHV model (\ref%
{18}), (\ref{21}) and, therefore, by corollary 1 satisfies the original Bell
inequality%
\begin{equation}
\left\vert \langle \lambda _{1}\lambda _{2}\rangle _{\rho
}^{(A_{1},B_{1})}-\langle \lambda _{1}\lambda _{2}\rangle _{\rho
}^{(A_{1},B_{2})}\right\vert \leq 1\mp \langle \lambda _{1}\lambda
_{2}\rangle _{\rho }^{(A_{2},B_{2})},  \label{28-1}
\end{equation}%
in its "perfect correlation" (minus sign) or "perfect anticorrelation" (plus
sign) form, respectively. In quantum terms, this inequality reads%
\begin{equation}
\left\vert \text{ }\mathrm{tr}[\rho \{A_{1}\otimes B_{1}\}]-\mathrm{tr}[\rho
\{A_{1}\otimes B_{2}\}]\right\vert \leq 1\mp \mathrm{tr}[\rho \{A_{2}\otimes
B_{2}\}].  \label{28}
\end{equation}%
\bigskip
\end{theorem}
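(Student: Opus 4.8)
The plan is to manufacture, out of the tensor-positive source operator $R_{\blacktriangleright}$, an explicit LHV model of the form (\ref{18}) for the four joint distributions $P_{\rho}^{(a_{i},b_{k})}$, and then to recognise condition (\ref{21}) for that model as precisely the left-hand side of (\ref{32-1}). As hidden-variable space I would take $\Omega:=\Lambda\times\Lambda$ (a Borel subset of $[-1,1]^{2}$, hence a standard Borel space), writing its points as $\omega=(\lambda_{2}^{\prime},\lambda_{2}^{\prime\prime})$, and as its distribution the image of $\sigma_{R_{\blacktriangleright}}$ under Bob's two POV measures, $\nu(\mathrm{d}\lambda_{2}^{\prime}\times\mathrm{d}\lambda_{2}^{\prime\prime}):=\mathrm{tr}[\sigma_{R_{\blacktriangleright}}\{\mathrm{M}_{2}^{(b_{1})}(\mathrm{d}\lambda_{2}^{\prime})\otimes\mathrm{M}_{2}^{(b_{2})}(\mathrm{d}\lambda_{2}^{\prime\prime})\}]$; since $\sigma_{R_{\blacktriangleright}}=\mathrm{tr}_{\mathcal{H}_{1}}^{(1)}[R_{\blacktriangleright}]$ inherits tensor-positivity from $R_{\blacktriangleright}$ and has unit trace, relation (\ref{30-1}) makes $\nu$ a genuine probability measure. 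For Bob I would use the deterministic kernels $P_{2}^{(b_{1})}(\mathrm{d}\lambda_{2}|\omega):=\delta_{\lambda_{2}^{\prime}}(\mathrm{d}\lambda_{2})$ and $P_{2}^{(b_{2})}(\mathrm{d}\lambda_{2}|\omega):=\delta_{\lambda_{2}^{\prime\prime}}(\mathrm{d}\lambda_{2})$, so that the functions $f_{2}^{(b_{k})}$ of (\ref{19}) become the two coordinate projections on $\Omega$. For Alice I would let $P_{1}^{(a_{i})}(\mathrm{d}\lambda_{1}|\omega)$ be the $\nu$-a.e.\ defined Markov kernel obtained by disintegrating over $\Omega$ the measure $\mathrm{tr}[R_{\blacktriangleright}\{\mathrm{M}_{1}^{(a_{i})}(\mathrm{d}\lambda_{1})\otimes\mathrm{M}_{2}^{(b_{1})}(\mathrm{d}\lambda_{2}^{\prime})\otimes\mathrm{M}_{2}^{(b_{2})}(\mathrm{d}\lambda_{2}^{\prime\prime})\}]$ on $\Lambda\times\Omega$, whose $\Omega$-marginal is exactly $\nu$ (integrate out $\lambda_{1}$ using $\int\mathrm{M}_{1}^{(a_{i})}(\mathrm{d}\lambda_{1})=\mathbb{I}_{\mathcal{H}_{1}}$) and which is non-negative with total mass one by (\ref{30-1}) applied to the positive operators $\mathrm{M}_{1}^{(a_{i})}(\mathrm{d}\lambda_{1}),\mathrm{M}_{2}^{(b_{1})}(\mathrm{d}\lambda_{2}^{\prime}),\mathrm{M}_{2}^{(b_{2})}(\mathrm{d}\lambda_{2}^{\prime\prime})$.

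Next I would verify the representation (\ref{18}) for each of the four pairs $(a_{i},b_{k})$. Substituting the kernels above into the right-hand side of (\ref{18}) and integrating the Dirac mass $P_{2}^{(b_{1})}(\cdot|\omega)$ (respectively $P_{2}^{(b_{2})}(\cdot|\omega)$) against $\lambda_{2}$, the residual $\omega$-integration together with the defining identity for Alice's kernel collapses everything to $\mathrm{tr}[R_{\blacktriangleright}\{\mathrm{M}_{1}^{(a_{i})}(\mathrm{d}\lambda_{1})\otimes\mathrm{M}_{2}^{(b_{1})}(\mathrm{d}\lambda_{2})\otimes\mathbb{I}_{\mathcal{H}_{2}}\}]$ for $(a_{i},b_{1})$ and to $\mathrm{tr}[R_{\blacktriangleright}\{\mathrm{M}_{1}^{(a_{i})}(\mathrm{d}\lambda_{1})\otimes\mathbb{I}_{\mathcal{H}_{2}}\otimes\mathrm{M}_{2}^{(b_{2})}(\mathrm{d}\lambda_{2})\}]$ for $(a_{i},b_{2})$; by the source conditions (\ref{29}), $\mathrm{tr}_{\mathcal{H}_{2}}^{(3)}[R_{\blacktriangleright}]=\mathrm{tr}_{\mathcal{H}_{2}}^{(2)}[R_{\blacktriangleright}]=\rho$, these reduce to $\mathrm{tr}[\rho\{\mathrm{M}_{1}^{(a_{i})}(\mathrm{d}\lambda_{1})\otimes\mathrm{M}_{2}^{(b_{1})}(\mathrm{d}\lambda_{2})\}]$ and $\mathrm{tr}[\rho\{\mathrm{M}_{1}^{(a_{i})}(\mathrm{d}\lambda_{1})\otimes\mathrm{M}_{2}^{(b_{2})}(\mathrm{d}\lambda_{2})\}]$, that is, to $P_{\rho}^{(a_{i},b_{1})}$ and $P_{\rho}^{(a_{i},b_{2})}$ by (\ref{24}). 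This proves the first assertion.

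For the second assertion I would evaluate condition (\ref{21}). With the above kernels the measure $\mu$ of (\ref{22}) is the image on $(\lambda_{1}^{\prime},\lambda_{2},\lambda_{2}^{\prime})$ of $P_{1}^{(a_{2})}(\mathrm{d}\lambda_{1}^{\prime}|\omega)\,\nu(\mathrm{d}\omega)$, i.e.\ $\mu(\mathrm{d}\lambda_{1}^{\prime}\times\mathrm{d}\lambda_{2}\times\mathrm{d}\lambda_{2}^{\prime})=\mathrm{tr}[R_{\blacktriangleright}\{\mathrm{M}_{1}^{(a_{2})}(\mathrm{d}\lambda_{1}^{\prime})\otimes\mathrm{M}_{2}^{(b_{1})}(\mathrm{d}\lambda_{2})\otimes\mathrm{M}_{2}^{(b_{2})}(\mathrm{d}\lambda_{2}^{\prime})\}]$. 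Splitting $\int\lambda_{2}^{\prime}(\lambda_{2}\mp\lambda_{1}^{\prime})\,\mu$ into two terms: in the first, integrate out $\lambda_{1}^{\prime}$ (so $\mathrm{M}_{1}^{(a_{2})}\mapsto\mathbb{I}_{\mathcal{H}_{1}}$ and $R_{\blacktriangleright}\mapsto\sigma_{R_{\blacktriangleright}}$), then use $\int\lambda_{2}\,\mathrm{M}_{2}^{(b_{1})}(\mathrm{d}\lambda_{2})=B_{1}$ and $\int\lambda_{2}^{\prime}\,\mathrm{M}_{2}^{(b_{2})}(\mathrm{d}\lambda_{2}^{\prime})=B_{2}$ to obtain $\mathrm{tr}[\sigma_{R_{\blacktriangleright}}\{B_{1}\otimes B_{2}\}]$; in the second, integrate out $\lambda_{2}$ (so $\mathrm{M}_{2}^{(b_{1})}\mapsto\mathbb{I}_{\mathcal{H}_{2}}$ and, by (\ref{29}), $R_{\blacktriangleright}\mapsto\rho$), then use $\int\lambda_{1}^{\prime}\,\mathrm{M}_{1}^{(a_{2})}(\mathrm{d}\lambda_{1}^{\prime})=A_{2}$ and $\int\lambda_{2}^{\prime}\,\mathrm{M}_{2}^{(b_{2})}(\mathrm{d}\lambda_{2}^{\prime})=B_{2}$ to obtain $\mathrm{tr}[\rho\{A_{2}\otimes B_{2}\}]$. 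Hence the left-hand side of (\ref{21}) equals $\mathrm{tr}[\sigma_{R_{\blacktriangleright}}\{B_{1}\otimes B_{2}\}]\mp\mathrm{tr}[\rho\{A_{2}\otimes B_{2}\}]$, which is $\geq 0$ exactly by the hypothesis (\ref{32-1}). Thus the LHV model is conditioned by (\ref{21}); corollary 1 then yields the original Bell inequality in the form (\ref{28-1}), and rewriting the correlators via (\ref{25}) gives (\ref{28}).

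The only step I expect to require genuine care is the construction of Alice's kernel $P_{1}^{(a_{i})}(\cdot|\omega)$: one has to check that $D_{1}\mapsto\mathrm{tr}[R_{\blacktriangleright}\{\mathrm{M}_{1}^{(a_{i})}(D_{1})\otimes\mathrm{M}_{2}^{(b_{1})}(\mathrm{d}\lambda_{2}^{\prime})\otimes\mathrm{M}_{2}^{(b_{2})}(\mathrm{d}\lambda_{2}^{\prime\prime})\}]$ extends to a genuine countably additive measure on $\Lambda\times\Omega$ (which follows from countable additivity of the POV measures and norm-continuity of $B\mapsto\mathrm{tr}[R_{\blacktriangleright}B]$ on bounded operators) and then invoke the disintegration theorem on the standard Borel spaces $\Lambda,\Omega$; it is precisely at this stage that tensor-positivity of $R_{\blacktriangleright}$, not mere self-adjointness of the source operator, is indispensable, since it is what makes the disintegrated kernel take values in $[0,1]$. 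Everything after that --- the four marginal identities of the second paragraph and the evaluation of (\ref{21}) --- is routine bookkeeping with partial traces and the definitions (\ref{24})--(\ref{26}), (\ref{29}), (\ref{x}).
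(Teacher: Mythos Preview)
Your argument is correct and reaches the same endpoint as the paper --- the measure $\mu$ you obtain in (\ref{22}) is exactly the paper's $\tau_{R_{\blacktriangleright}}^{(2)}(\mathrm{d}\lambda_{1}^{\prime}\times\mathrm{d}\lambda_{2}\times\mathrm{d}\lambda_{2}^{\prime})=\mathrm{tr}[R_{\blacktriangleright}\{\mathrm{M}_{1}^{(a_{2})}(\mathrm{d}\lambda_{1}^{\prime})\otimes\mathrm{M}_{2}^{(b_{1})}(\mathrm{d}\lambda_{2})\otimes\mathrm{M}_{2}^{(b_{2})}(\mathrm{d}\lambda_{2}^{\prime})\}]$, and your evaluation of (\ref{21}) coincides with the paper's computation (A5). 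The route differs only in how the LHV model (\ref{18}) is produced: the paper writes down the two tripartite measures $\tau_{R_{\blacktriangleright}}^{(1)},\tau_{R_{\blacktriangleright}}^{(2)}$, checks they are compatible in the sense $\tau_{R_{\blacktriangleright}}^{(1)}(\Lambda\times\cdot\times\cdot)=\tau_{R_{\blacktriangleright}}^{(2)}(\Lambda\times\cdot\times\cdot)$, and then invokes an external result (theorem~2 of reference~[9]) that packages such compatible marginal data into an LHV model. You instead build the model by hand --- $\Omega=\Lambda\times\Lambda$, $\nu$ the $\sigma_{R_{\blacktriangleright}}$-distribution of Bob's pair, Bob's kernels the coordinate Dirac masses, Alice's kernels obtained by disintegrating $\tau_{R_{\blacktriangleright}}^{(i)}$ over its $\Omega$-marginal. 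Your version is more self-contained (no appeal to [9]) and makes transparent exactly where tensor-positivity enters; the paper's version is shorter and sidesteps the measure-theoretic housekeeping of the disintegration step. One small wording point: the continuity you need for countable additivity of the scalar measures is continuity of $B\mapsto\mathrm{tr}[R_{\blacktriangleright}B]$ in the weak (equivalently, ultraweak) operator topology, which holds because $R_{\blacktriangleright}$ is trace class; calling it ``norm-continuity'' is not quite the right hypothesis, though the conclusion stands.
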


A similar statement holds for a state $\rho $ that has a tensor-positive
source operator $R_{\blacktriangleleft }\overset{\otimes }{>}0$ on $\mathcal{%
H}_{1}\otimes \mathcal{H}_{1}\otimes \mathcal{H}_{2}.$

We stress that condition (53) introduced in \cite{5} for a separable quantum
case and condition (42) in \cite{7} represent particular cases of the
quantum LHV condition (\ref{32-1}).

From theorem 2 it follows that, for a state $\rho $ on $\mathcal{H}\otimes 
\mathcal{H}$ with a tensor-positive source operator $R_{\blacktriangleright }%
\overset{\otimes }{>}0$ and three \emph{given} quantum observables $A_{1},$ $%
A_{2}=B_{1}$, $B_{2}$ on $\mathcal{H}$, the original Bell inequality 
\begin{equation}
\left\vert \text{ }\mathrm{tr}[\rho \{A_{1}\otimes B_{1}\}]-\mathrm{tr}[\rho
\{A_{1}\otimes B_{2}\}]\right\vert \leq 1\mp \mathrm{tr}[\rho \{B_{1}\otimes
B_{2}\}],  \label{37-1}
\end{equation}%
in its minus sign or plus sign version, holds if%
\begin{equation}
\mathrm{tr}[\sigma _{R_{\blacktriangleright }}\{B_{1}\otimes B_{2}\}]\mp 
\mathrm{tr}[\rho \{B_{1}\otimes B_{2}\}]\geq 0,  \label{37-2}
\end{equation}%
respectively. We stress that condition (\ref{37-2}), which is the quantum
version of the general LHV condition (\ref{8}), does not mean the perfect
correlation or anticorrelation of Alice's and Bob's outcomes if observable $%
B_{1}$ is measured at both sites.

The following statement specifies the property of a bipartite quantum state
ensuring the validity of the "perfect correlation" (minus sign) version of
the original Bell inequality (\ref{37-1}) for any three quantum observables $%
A_{1},$ $A_{2}=B_{1},$ $B_{2},$ measured, possibly in average, at sites of
Alice and Bob, respectively.

\begin{theorem}
If, for a quantum state $\rho $ on $\mathcal{H}\otimes \mathcal{H},$ there
exists a tensor-positive source operator $R\overset{\otimes }{>}0$ on $%
\mathcal{H}\otimes \mathcal{H}\otimes \mathcal{H}$ such that 
\begin{equation}
\mathrm{tr}_{\mathcal{H}}^{(k)}[R]=\rho ,\text{ \ \ \ }k=1,2,3,  \label{39}
\end{equation}%
then this state $\rho $ satisfies the "perfect correlation" version of the
original Bell inequality: 
\begin{equation}
\left\vert \langle \lambda _{1}\lambda _{2}\rangle _{\rho
}^{(A_{1},B_{1})}-\langle \lambda _{1}\lambda _{2}\rangle _{\rho
}^{(A_{1},B_{2})}\right\vert \leq 1-\langle \lambda _{1}\lambda _{2}\rangle
_{\rho }^{(B_{1},B_{2})},  \label{40-1}
\end{equation}%
for any three bounded quantum observables $A_{1},$ $A_{2}=B_{1},$ $B_{2}$ on 
$\mathcal{H}$, with eigenvalues in $[-1,1]$ of an arbitrary spectral type,
discrete or continuous. In quantum terms, this inequality reads:%
\begin{equation}
\left\vert \text{ }\mathrm{tr}[\rho \{A_{1}\otimes B_{1}\}]-\mathrm{tr}[\rho
\{A_{1}\otimes B_{2}\}]\right\vert \leq 1-\mathrm{tr}[\rho \{B_{1}\otimes
B_{2}\}].  \label{40}
\end{equation}
\end{theorem}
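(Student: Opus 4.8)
The plan is to deduce Theorem~3 from Theorem~2 by checking that the hypothesis of Theorem~2 is automatically satisfied when $B_1$ is measured at both sites. First I would observe that the source operator $R$ in Theorem~3 is, in particular, a tensor-positive source operator $R_{\blacktriangleright}\overset{\otimes}{>}0$ on $\mathcal{H}\otimes\mathcal{H}\otimes\mathcal{H}$ in the sense required by Theorem~2, since the marginal conditions \eqref{39} for $k=1,2$ give exactly the two relations $\mathrm{tr}_{\mathcal{H}}^{(2)}[R]=\mathrm{tr}_{\mathcal{H}}^{(3)}[R]=\rho$ that define a source operator of the type $T_{\blacktriangleright}$ in \eqref{29}. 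Hence, by the first part of Theorem~2, the quantum correlation scenario admits an LHV model \eqref{18} for joint probability distributions, regardless of the choice of observables.

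Next I would verify the extra condition \eqref{37-2} (equivalently the minus-sign form of \eqref{32-1} with $A_2=B_1$). The key point is the computation of $\sigma_{R_{\blacktriangleright}}:=\mathrm{tr}_{\mathcal{H}}^{(1)}[R]$, which is a density operator on $\mathcal{H}\otimes\mathcal{H}$. Using the third marginal relation in \eqref{39}, namely $\mathrm{tr}_{\mathcal{H}}^{(1)}[R]=\rho$ — wait, more carefully: \eqref{39} states $\mathrm{tr}_{\mathcal{H}}^{(k)}[R]=\rho$ for $k=1,2,3$, so tracing out the \emph{first} factor yields $\sigma_{R_{\blacktriangleright}}=\mathrm{tr}_{\mathcal{H}}^{(1)}[R]=\rho$ as an operator on the second and third copies of $\mathcal{H}$. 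Therefore
\begin{equation}
\mathrm{tr}[\sigma_{R_{\blacktriangleright}}\{B_1\otimes B_2\}]=\mathrm{tr}[\rho\{B_1\otimes B_2\}],
\end{equation}
so the left-hand side of \eqref{37-2} equals $\mathrm{tr}[\rho\{B_1\otimes B_2\}]-\mathrm{tr}[\rho\{B_1\otimes B_2\}]=0\geq 0$, and the minus-sign form of the quantum LHV condition holds identically. This is the crux of why the "perfect correlation" version comes for free when the same observable sits at both sites: the relevant marginal of the symmetric dilation $R$ reproduces $\rho$ exactly.

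Having checked both hypotheses of Theorem~2 (with $A_2=B_1$ and the minus sign), I would then simply invoke Theorem~2: the scenario admits the conditional LHV model \eqref{18},\eqref{21}, and by corollary~1 the original Bell inequality \eqref{28-1} holds in its minus-sign form, which with $A_2=B_1$ is exactly \eqref{40-1}, and in quantum terms \eqref{40}. I would also note that since $A_1$ and $B_2$ were arbitrary bounded observables with eigenvalues in $[-1,1]$, and the argument used nothing about them, the inequality holds for \emph{any} such triple $A_1,A_2=B_1,B_2$, as claimed.

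The main obstacle, and the one place demanding care, is the bookkeeping of which partial trace produces which operator: one must confirm that the $\sigma_{R_{\blacktriangleright}}$ appearing in condition \eqref{32-1} is the trace over the \emph{same} copy of $\mathcal{H}$ that is "added" relative to $\mathcal{H}\otimes\mathcal{H}$ (the first slot), and that the remaining two slots are precisely the ones carrying $B_1$ and $B_2$. Given the indexing conventions fixed in \eqref{29}–\eqref{x}, this reduces to matching the slot $k=1$ in \eqref{39} with the definition \eqref{x}, after which the cancellation $\mathrm{tr}[\sigma_{R_{\blacktriangleright}}\{B_1\otimes B_2\}]-\mathrm{tr}[\rho\{B_1\otimes B_2\}]=0$ is immediate and no genuine inequality needs to be established. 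Everything else is a direct appeal to the already-proved Theorem~2 and corollary~1.
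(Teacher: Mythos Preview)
Your proof is correct and follows essentially the same route as the paper's: from \eqref{39} with $k=1$ one has $\sigma_{R}=\mathrm{tr}_{\mathcal{H}}^{(1)}[R]=\rho$, so the minus-sign form of condition \eqref{32-1} with $A_2=B_1$ collapses to $0\geq 0$, and Theorem~2 then yields \eqref{40-1}. (One small slip: the two relations $\mathrm{tr}_{\mathcal{H}}^{(2)}[R]=\mathrm{tr}_{\mathcal{H}}^{(3)}[R]=\rho$ that make $R$ a source operator of type $T_{\blacktriangleright}$ come from the cases $k=2,3$ of \eqref{39}, not $k=1,2$ as you wrote, though you stated the correct identities.)
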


\begin{proof}
For a source operator $R\overset{\otimes }{>}0$ specified by property (\ref%
{39}), the reduced operator $\sigma _{R},$ given by (\ref{x}), is equal to $%
\rho .$ Therefore, the minus sign version of condition (\ref{32-1}) takes
the form:%
\begin{equation}
\mathrm{tr}[\rho \{B_{1}\otimes B_{2}\}]-\mathrm{tr}[\rho \{A_{2}\otimes
B_{2}\}]\geq 0,  \label{41}
\end{equation}%
which is always true if $A_{2}=B_{1}.$ By theorem 2, this proves the
statement.
\end{proof}

Due to theorems 2, 3, every bipartite state $\rho $ with with the state
property (\ref{39}) (i) admits an LHV description under any bipartite
quantum measurements, ideal or non-ideal, with two settings per site; (ii)
does not need to exhibit perfect correlations (may even have a negative
correlation function $\langle \lambda _{1}\lambda _{2}\rangle _{\rho
}^{(B_{1},\text{ }B_{1})}$) if the same quantum observable is measured at
both sites -- but satisfies the "perfect correlation" version (\ref{40}) of
the original Bell inequality for any three quantum observables $A_{1},$ $%
A_{2}=B_{1},$ $B_{2},$ measured ideally or non-ideally at sites of Alice and
Bob, respectively, and having eigenvalues in $[-1,1].$

Note that an arbitrary separable quantum state does not need to have the
state property (\ref{39}) and, therefore, to satisfy inequality (\ref{40})
for arbitrary $A_{1},$ $A_{2}=B_{1},$ $B_{2}$, and that the class of
bipartite quantum states specified by property (\ref{39}) includes separable
and nonseparable states introduced by us earlier in [5 - 7] only as a
particular subclass.

We stress that the physical context of a quantum correlation scenario
described by three quantum observables $A_{1},$ $A_{2}=B_{1},$ $B_{2}$ is 
\emph{quite similar} to the physical context of a classical correlation
scenario described by three classical observables $A_{1},$ $A_{2}=B_{1},$ $%
B_{2}$ and discussed in section 3 - with the only substitution of term "%
\emph{classical particles}" by term "\emph{quantum particles}".

Namely, let Alice/Bob joint measurements be performed by identical
measurement devices with settings $a_{1},$ $a_{2}=b_{1},$ $b_{2}$ and upon a
pair of quantum particles, that are indistinguishable, identically prepared
and do not interact with each other during measurements. Then, due to the 
\emph{physical indistinguishability} of sites and measurements of Alice and
Bob specified by the same setting $a_{2}=b_{1},$ these quantum measurements
are described by the same operator averages $\dint \lambda _{1}\mathrm{M}%
_{1}^{(a_{2}=b_{1})}(\mathrm{d}\lambda _{1})=\dint \lambda _{2}\mathrm{M}%
_{2}^{(b_{1})}(\mathrm{d}\lambda _{2})$ and, therefore, in view of relations
(\ref{26}), should be in average modeled by the same quantum observable $%
A_{2}=B_{1}$ at both sites.

\subsection{Examples}

In this section, we present examples of bipartite quantum states, separable
and nonseparable, specified by theorem 3 above. Satisfying inequality (\ref%
{40}) for arbitrary three quantum observables $A_{1},$ $A_{2}=B_{1},$ $B_{2}$%
, with eigenvalues in $[-1,1],$ these bipartite quantum states do not need
to exhibit perfect correlations and may even have a negative correlation
function $\langle \lambda _{1}\lambda _{2}\rangle _{\rho }^{(B_{1},B_{2})}$
whenever the same quantum observable is measured at both sites.

\begin{enumerate}
\item Every state $\rho $ on $\mathcal{H}\otimes \mathcal{H},$ reduced from
a symmetric density operator $\sigma $ on $\mathcal{H}\otimes \mathcal{H}%
\otimes \mathcal{H};$

\item As it is proved by theorem 3 in \cite{7}, every Werner state \cite{14}:%
\begin{equation}
W_{d}(\Phi )=\frac{1+\Phi }{2}\frac{\mathrm{P}_{d}^{(+)}}{r_{d}^{(+)}}+\frac{%
1-\Phi }{2}\frac{\mathrm{P}_{d}^{(-)}}{r_{d}^{(-)}},\text{ \ \ \ }\Phi \in
\lbrack -1,1],  \label{42}
\end{equation}%
on $\mathbb{C}^{d}\otimes \mathbb{C}^{d}$ for $d\geq 3,$ separable ($\Phi
\in \lbrack 0,1]$) or nonseparable ($\Phi \in \lbrack -1,0)$), and every
separable Werner state $W_{2}(\Phi ),$ $\Phi \in \lbrack 0,1]$ on $\mathbb{C}%
^{2}\otimes \mathbb{C}^{2}.$

Here, $\mathrm{P}_{d}^{(\pm )}$ are the orthogonal projections onto the
symmetric (plus sign) and antisymmetric (minus sign) subspaces of $\mathbb{C}%
^{d}\otimes \mathbb{C}^{d}$ with dimensions $r_{d}^{(\pm )}=\mathrm{tr}%
[P_{d}^{(\pm )}]=\frac{d(d\pm 1)}{2},$ respectively.

\item As it is proved by theorem 2 in \cite{8}, each of the noisy states on $%
\mathbb{C}^{d}\otimes \mathbb{C}^{d}:$%
\begin{eqnarray}
\eta _{\psi }(\beta ) &=&\beta \text{ }|\psi \rangle \langle \psi |\text{ }+%
\text{ }(1-\beta )\frac{I_{\mathbb{C}^{d}\otimes \mathbb{C}^{d}}}{d^{2}},%
\text{ \ \ \ }\beta \in \lbrack 0,\frac{1}{2\mathbf{\gamma }_{\psi }^{3}+1}],
\label{43} \\
\mathbf{\gamma }_{\psi } &:&=d\left\Vert \mathrm{tr}_{\mathbb{C}%
^{d}}^{(1)}[|\psi \rangle \langle \psi |]\right\Vert =d\left\Vert \mathrm{tr}%
_{\mathbb{C}^{d}}^{(2)}[|\psi \rangle \langle \psi |]\right\Vert \geq 1, 
\notag
\end{eqnarray}%
corresponding to a pure state $|\psi \rangle \langle \psi |$. Here, $%
\left\Vert \cdot \right\Vert $ means the operator norm and we take into
account that, for any pure state on $\mathbb{C}^{d}\otimes \mathbb{C}^{d},$
the eigenvalues (hence, the operator norms) of the reduced states $\mathrm{tr%
}_{\mathbb{C}^{d}}^{(1)}[|\psi \rangle \langle \psi |]$ and $\mathrm{tr}_{%
\mathbb{C}^{d}}^{(2)}[|\psi \rangle \langle \psi |]$ on $\mathbb{C}^{d}$
coincide.

In particular, the separable noisy singlet\footnote{%
Here, $\{e_{n},$ $n=1,2\}$ is an orthonormal basis in $\mathbb{C}^{2}.$}%
\begin{eqnarray}
\eta _{\psi _{S}}(\beta ) &=&\beta \text{ }|\psi _{S}\rangle \langle \psi
_{S}|\text{ }+\text{ }(1-\beta )\frac{I_{\mathbb{C}^{2}\otimes \mathbb{C}%
^{2}}}{4},\text{ \ \ \ }\beta \in (0,\text{ }\frac{1}{3}\text{ }],
\label{44} \\
\psi _{S} &=&\frac{1}{\sqrt{2}}(e_{1}\otimes e_{2}-e_{2}\otimes e_{1}), 
\notag
\end{eqnarray}%
for which the above parameter $\mathbf{\gamma }_{\psi _{S}}=1.$ This noisy
state constitutes the two-qubit Werner state $W_{2}((1-3\beta )/2).$

Note that whenever the spin observable $\sigma _{n}$ along an arbitrary
direction $n$ in $\mathbb{R}^{3}$ is measured in state $\eta _{\psi
_{s}}(\beta )$ at both sites, then the correlation function is negative: 
\begin{equation}
\langle \lambda _{1}\lambda _{2}\rangle _{\eta _{\psi _{s}}(\beta
)}^{(\sigma _{n},\text{ }\sigma _{n})}=\mathrm{tr}[\eta _{\psi _{s}}(\beta
)\{\sigma _{n}\otimes \sigma _{n})\}]=-\beta <0.  \label{45}
\end{equation}%
This rules out perfect correlations. Furthermore, if Alice's and Bob's spin
measurements on state $\eta _{\psi _{s}}(\beta )$ are projective, then,
given an outcome, say of Alice, the conditional probability that Bob
observes a different outcome is equal to $(1+\beta )/2$ while the
conditional probability that Bob observes the same outcome is $(1-\beta )/2.$
\end{enumerate}

\section{Conclusion}

In the present paper, we introduce a \emph{new condition} sufficient for the
validity of the original Bell inequality in an LHV frame. This LHV condition
is more general than the assumption on perfect correlations or
anticorrelations and incorporates the latter assumption only as a particular
case. For dichotomic bipartite measurements, the new general LHV condition
can be tested experimentally.

Specified for a quantum bipartite case, the new general LHV condition
reduces to the form that does not necessarily imply any correlation between
Alice's and Bob's outcomes if the same quantum observable is measured at
both sites and leads to the existence of the whole class of bipartite
quantum states, separable and nonseparable, that admit an LHV description
under any bipartite quantum correlation scenario with two settings per site
and never violate the "perfect correlation" version of the original Bell
inequality - though do not necessarily exhibit perfect correlations and may
even have a negative correlation function if the same quantum observable is
measured at both sites. Separable and nonseparable bipartite quantum states
specified by us earlier in [5 - 8] are included into the new state class
introduced in the present paper only as a particular subclass.

Our comparative analysis of classical and quantum measurement situations
indicates that an arbitrary classical state $\pi $ satisfies inequality 
\begin{equation}
\left\vert \langle \lambda _{1}\lambda _{2}\rangle _{\pi
}^{(A_{1},B_{1})}-\langle \lambda _{1}\lambda _{2}\rangle _{\pi
}^{(A_{1},B_{2})}\right\vert \leq 1-\langle \lambda _{1}\lambda _{2}\rangle
_{\pi }^{(B_{1},B_{2})}  \label{46}
\end{equation}%
under any Alice's and Bob's measurements, ideal or non-ideal, of arbitrary
classical observables $A_{1},$ $A_{2}=B_{1},$ $B_{2}$ with values in $[-1,1]$
whereas an arbitrary separable\footnote{%
Any separable quantum state admits an LHV description.} quantum state $\rho $
\emph{does not need }to satisfy inequality%
\begin{equation}
\left\vert \langle \lambda _{1}\lambda _{2}\rangle _{\rho
}^{(A_{1},B_{1})}-\langle \lambda _{1}\lambda _{2}\rangle _{\rho
}^{(A_{1},B_{2})}\right\vert \leq 1-\langle \lambda _{1}\lambda _{2}\rangle
_{\rho }^{(B_{1},B_{2})}  \label{47}
\end{equation}%
under any Alice's and Bob's measurements, ideal or non-ideal, of arbitrary
quantum observables $A_{1},$ $A_{2}=B_{1},$ $B_{2}$, with eigenvalues in $%
[-1,1]$.

This, in particular, means\footnote{%
See the discussions at the ends of sections 3, 4.} that, under a $2\times 2$
- setting correlation experiment, performed at both sites by identical
measurement devices with settings $a_{1},$ $a_{2}=b_{1},$ $b_{2}$ and upon a
pair of non-interacting, identically prepared, indistinguishable physical
particles, quantum or classical, the "perfect correlation" version of the
original Bell inequality does not need to hold in an arbitrary separable
quantum case but is always fulfilled in every classical case,\emph{\ }that
is, for any initial state of classical particles and any type of Alice's and
Bob's classical measurements, ideal (necessarily exhibiting perfect
correlations if the same classical observables is measured at both sites) or
non-ideal (not necessarily exhibiting perfect correlations).

Thus, under classical and quantum correlation experiments with the same
physical context, a classical state and a separable quantum state may
exhibit statistically different correlations and \emph{the original Bell
inequality reveals a gap between classicality and quantum separability. }%
This observation agrees with our arguments in \cite{5, 15} that an arbitrary
separable quantum state does not need to satisfy every probabilistic
constraint\ inherent to bipartite measurements on a classical system and
also with the statement of Ollivier and Zurek: "absence of entanglement does
not imply classicality" \cite{16}, built up on the notion of a quantum
discord.

The results of the present paper \emph{disprove} in rigorous mathematical
terms the faulty claims of Simon \cite{3} and Zukowski \cite{4} that, in any
bipartite case, classical or quantum, the perfect correlation version of the
original Bell inequality holds only under the assumption on "perfect
correlations if the same observable is measured at both sites" (\cite{3},
abstract), and that, for the validity of the original Bell inequality in an
LHV frame, the assumption on perfect correlations or anticorrelations is
"minimal" (\cite{4}, page 544 ) and without this assumption "the original
Bell inequality cannot be derived" (\cite{4}, page 544).\medskip

We note that it is specifically due to the latter misconception that the
original Bell inequality has been disregarded in physical applications. Our
results, however, indicate that, for a variety of quantum states, not
necessarily exhibiting any correlation between Alice's and Bob's outcomes
whenever the same quantum observable is measured at both sites, the "perfect
correlation" version of the original Bell inequality holds for any three
bounded quantum observables $A_{1},A_{2}=B_{1},$ $B_{2}$ measured ideally or
non-ideally at sites of Alice and Bob and that, in contrast to the
Clauser-Horne-Shimony-Holt (CHSH) inequality \cite{17} that does not
distinguish between classicality and quantum separability, the original Bell
inequality does distinguish between these two physical concepts.

\section{Appendix}

\begin{proof}[Proof of theorem 2]
Let a state $\rho $ have a source operator $R_{\blacktriangleright }\overset{%
\otimes }{>}0.$ Then, due to Eqs. (\ref{24}), (\ref{29}), the joint
distributions $P_{\rho }^{(a_{i},b_{1})},$ $P_{\rho }^{(a_{i},b_{2})},$ $%
\forall i=1,2,$ admit representations:%
\begin{eqnarray}
P_{\rho }^{(a_{i},b_{1})}(\mathrm{d}\lambda _{1}\times \mathrm{d}\lambda
_{2}) &=&\mathrm{tr}[R_{\blacktriangleright }\{\mathrm{M}_{1}^{(a_{i})}(%
\mathrm{d}\lambda _{1})\otimes \mathrm{M}_{2}^{(b_{1})}(\mathrm{d}\lambda
_{2})\otimes \mathbb{I}_{\mathcal{H}_{2}}\}],  \TCItag{A1} \\
P_{\rho }^{(a_{i},b_{2})}(\mathrm{d}\lambda _{1}\times \mathrm{d}\lambda
_{2}) &=&\mathrm{tr}[R_{\blacktriangleright }\{\mathrm{M}_{1}^{(a_{i})}(%
\mathrm{d}\lambda _{1})\otimes \mathbb{I}_{\mathcal{H}_{2}}\otimes \mathrm{M}%
_{2}^{(b_{2})}(\mathrm{d}\lambda _{2})\}],  \notag
\end{eqnarray}%
where $\mathrm{M}_{1}^{(a_{i})}(\Lambda )=\mathbb{I}_{\mathcal{H}_{1}},$ $%
\mathrm{M}_{2}^{(b_{k})}(\Lambda )=\mathbb{I}_{\mathcal{H}_{2}},$ $i,k=1,2.$
From these representations and property (\ref{30-1}) it follows that, for
any index $i=1,2,$ distributions $P_{\rho }^{(a_{i},b_{1})},$ $P_{\rho
}^{(a_{i},b_{2})}$ are marginals of the normalized joint probability measure 
\begin{equation}
\tau _{R_{\blacktriangleright }}^{(i)}(\mathrm{d}\lambda _{1}\times \mathrm{d%
}\lambda _{2}\times \mathrm{d}\lambda _{2}^{\prime }):=\mathrm{tr}%
[R_{\blacktriangleright }\{\mathrm{M}_{1}^{(a_{i})}(\mathrm{d}\lambda
_{1})\otimes \mathrm{M}_{2}^{(b_{1})}(\mathrm{d}\lambda _{2})\otimes \mathrm{%
M}_{2}^{(b_{2})}(\mathrm{d}\lambda _{2}^{\prime })\}],  \tag{A2}
\end{equation}%
and measures $\tau _{R_{\blacktriangleright }}^{(1)},$ $\tau
_{R_{\blacktriangleright }}^{(2)}$ are compatible in the sense:%
\begin{eqnarray}
\tau _{R_{\blacktriangleright }}^{(1)}(\Lambda \times \mathrm{d}\lambda
_{2}\times \mathrm{d}\lambda _{2}^{\prime }) &=&\mathrm{tr}%
[R_{\blacktriangleright }\{\mathbb{I}_{\mathcal{H}_{1}}\otimes \mathrm{M}%
_{2}^{(b_{1})}(\mathrm{d}\lambda _{2})\otimes \mathrm{M}_{2}^{(b_{2})}(%
\mathrm{d}\lambda _{2}^{\prime })\}]  \TCItag{A3} \\
&=&\tau _{R_{\blacktriangleright }}^{(2)}(\Lambda \times \mathrm{d}\lambda
_{2}\times \mathrm{d}\lambda _{2}^{\prime }).  \notag
\end{eqnarray}%
Due to theorem 2 in [9], the existence of compatible probability measures $%
\tau _{R_{\blacktriangleright }}^{(i)},$ $i=1,2,$ each returning
distributions $P_{\rho }^{(a_{i},b_{1})},$ $P_{\rho }^{(a_{i},b_{2})}$ as
marginals, implies the existence for this correlation experiment of the LHV
model (\ref{18}) where distribution (\ref{22}) takes the form:%
\begin{equation}
\mu (\mathrm{d}\lambda _{1}^{\prime }\times \mathrm{d}\lambda _{2}\times 
\mathrm{d}\lambda _{2}^{\prime })=\tau _{R_{\blacktriangleright }}^{(2)}(%
\mathrm{d}\lambda _{1}^{\prime }\times \mathrm{d}\lambda _{2}\times \mathrm{d%
}\lambda _{2}^{\prime }).  \tag{A4}
\end{equation}%
Substituting this expression into the left-hand side of Eq. (\ref{21}) and
taking into account Eqs. (A2), (\ref{32-1}), (\ref{x}), we derive the
relations:%
\begin{eqnarray}
&&\dint \lambda _{2}^{^{\prime }}(\lambda _{2}\mp \lambda _{1}^{\prime })%
\text{ }\mu (\mathrm{d}\lambda _{1}^{\prime }\times \mathrm{d}\lambda
_{2}\times \mathrm{d}\lambda _{2}^{\prime })  \TCItag{A5} \\
&=&\dint \lambda _{2}^{\prime }(\lambda _{2}\mp \lambda _{1}^{\prime })\text{
}\mathrm{tr}[R_{\blacktriangleright }\{\mathrm{M}_{1}^{(a_{2})}(\mathrm{d}%
\lambda _{1}^{\prime })\otimes \mathrm{M}_{2}^{(b_{1})}(\mathrm{d}\lambda
_{2})\otimes \mathrm{M}_{2}^{(b_{2})}(\mathrm{d}\lambda _{2}^{\prime })\}] 
\notag \\
&=&\mathrm{tr}[\sigma _{R_{\blacktriangleright }}\{B_{1}\otimes B_{2}\}]\mp 
\mathrm{tr}[\rho \{A_{2}\otimes B_{2}\}]\geq 0,  \notag
\end{eqnarray}%
meaning the validity of condition (\ref{21}). This proves the statement.
\end{proof}

\bigskip

\end{document}